\newtheorem{theorem}{Theorem}
\newtheorem{definition}[theorem]{Definition}
\newtheorem{lemma}[theorem]{Lemma}
\newtheorem{proposition}[theorem]{Proposition}
\shorttitle}
\@nx\MakeUppercase{\the\toks@}}
\authors}
\title[$\alpha$-Hypergeometric Uncertain Volatility Models]{$\alpha$-Hypergeometric Uncertain Volatility Models and their Connection to 2BSDEs}
\begin{document}
	
	\author[Zaineb Mezdoud, Carsten Hartmann, Mohamed Riad Remita, Omar Kebiri] { Zaineb Mezdoud$^{\dagger,1a}$, Carsten Hartmann$^{\ddagger,1b}$, Mohamed Riad Remita$^{\dagger,1c}$, Omar Kebiri$^{\ddagger,1d}$}
	\address{$^{\dagger}$ Laboratory of probabilities and statistics, University of Badji-Mokhtar Annaba, Algeria}
	\address{$^{\ddagger}$ Institute of Mathematics,Brandenburgische Technische Universit\"at Cottbus-Senftenberg, Germany}
	
	\footnote{
		E-mails: $^A$\href{mailto:e-mail address}{zaineb.mezdoud@univ-annaba.org}, \href{mailto:e-mail address}{zaineb.mezdoud@gmail.com},\;  
		$^B$\href{mailto:e-mail address}{carsten.hartmann@b-tu.de},\; $^C$\href{mailto:e-mail address}{r$\_$remita@yahoo.fr},\; $^D$\href {mailto:e-mail address}{omar.kebiri@b-tu.de}}
	\begin{abstract}
		In this article we propose a $\alpha$-hypergeometric model with uncertain volatility (UV) where we derive a worst-case scenario for option pricing. The approach is based on the connexion between a certain class of nonlinear partial differential equations of HJB-type (G-HJB equations), that govern the nonlinear expectation of the UV model and that provide an alternative to the difficult model calibration problem of UV models, and second-order backward stochastic differential equations (2BSDEs). Using asymptotic analysis for the G-HJB equation and the equivalent 2BSDE representation,  we derive a limit model that provides an accurate description of the worst-case price scenario in cases when the bounds of the UV model are slowly varying. The analytical results are tested by numerical simulations using a deep learning based approximation of the underlying 2BSDE.
	\end{abstract}
	
	\keywords{$\alpha$-hypergeometric stochastic volatility model, uncertain volatility model, 2BSDE, deep learning based discretisation of 2BSDE}
	
	\subjclass{91G30, 35Q93} 
	
	\maketitle
	
	\section{Introduction}
	
	The classical option pricing problem based on the seminal work by Black and Scholes \cite{black1973pricing} assumes that the volatility of the underlying asset is constant over time. While the Black-Scholes model is still considered an important paradigm for option pricing, there is plenty of empirical evidences that the assumption of constant volatility is not adequate. In order to come up with more realistic models, various strategies have been proposed to treat the volatility of asset prices as a stochastic process \cite{hull1987pricing}. One of the most famous representative of the large class of stochastic volatility models is the Heston model \cite{heston1993closed} that has become the basis of many other models, such as jump diffusion models \cite{lee2010detecting}, $\alpha$-hypergeometric models  \cite{da2016alpha}, or various forms of uncertain volatility models (UVM) such as  \cite{avellaneda1995pricing,fouque2014approximation,guyon2013nonlinear}, all of which  can be considered as extensions of the Black-Scholes model and which share many features with the model considered in this article.
	
	One of the common feature of all  stochastic volatility models is that the volatility process can only be indirectly observed through the asset price, which poses specific challenges for the parameter estimation (or: \emph{calibration}) of these models. Standard approaches are based on maximum likelihood estimation using (filtered) time series data \cite{sahalia2007mle,hurn2015mle} or fitting of the implied volatility surface \cite{gatheral2006smile,gilli2021smile}. In the Heston model, the price hits zero in finite time unless the Feller condition is imposed. As a consequence, the underlying optimisation problems are typically endowed with constraints, which pose additional problems in model calibration.

	Here we use an alternative approach, in which we consider the unknown diffusion coefficient of the stochastic volatility model a bounded random variable. Specifically, we  focus on the UVM developed by \cite{avellaneda1995pricing} and consider an  $\alpha$-hypergeometric stochastic volatility model of the form
	\begin{subequations}\label{hyperG}
		\begin{align}\label{hyperGx}
			dX_{t} & =rX_{t}dt+X_{t}q e^{V_{t}}dW_{t}^{1}\\\label{hyperGv}
			dV_{t} & = (a-be^{\alpha V_{t}})dt+\sigma dW_{t}^{2},
		\end{align}
	\end{subequations}
	where $W^{1}_{t}$ and $W^{2}_{t}$ are correlated Brownian motions, with $d(W_{t}^{1}, W_{t}^{2}) = \rho dt$  for some $|\rho|\leq 1$, and $b,\alpha,\sigma > 0$ and $a\in \mathbb{R}$ are constants; the parameter $q$ is unknown; the only information available is that $q\in [\sigma_{\min}, \sigma_{\max}]$ for some  $\sigma_{\max},\sigma_{\min}\in\mathbb{R^*_+}$.
	This implies that the volatility $\beta_t$ of the risky asset under the risk-neutral measure $\mathbb{Q}$,
	\begin{equation}
		dX_{t}=rX_{t}dt+X_{t}\beta_{t}dW_{t}^{1}\,,
	\end{equation}
	where $r\in\mathbb{R}$ is the risk-free interest rate, is stochastic with $\underline{\sigma}_t\le \beta_t\le \overline{\sigma}_t$, with
	\[
	\underline{\sigma}_{t}=\sigma_{\min} F(V_{t})\,,\quad \overline{\sigma}_{t}:=\sigma_{\max} F(V_{t})\quad \textrm{for }\; 0\leq t\leq T\,.
	\]
	Here $F>0$ is a differentiable increasing function that we choose to be $F(v)=e^{v}$.
	
	Our aim is to derive worst-case pricing scenarios for the seller in the spirit of  the work \cite{da2016alpha}, without needing to calibrate the model exactly. 
	To this end, we rescale time in the volatility equation in (\ref{hyperG}) according to $t\mapsto \delta t$,  which yields
	\begin{subequations}\label{hyperGdelta}
		\begin{align}\label{deldX}
			dX_{t} & =rX_{t}dt+X_{t}q e^{V_{t}}dW_{t}^{1}\\\label{deldV}
			dV_{t} & = \delta (a-be^{\alpha V_{t}})dt+\sqrt{\delta}\sigma dW_{t}^{2}\,
		\end{align}
	\end{subequations}
	and allows us to smoothly interpolate between an UVM and a fixed volatility model (cf.~\cite{fouque2014approximation}). The parameter $\delta > 0$ symbolizes the reciprocal of the time-scale of the process $V $, and thus the standard UVM can be formally obtained by sending $\delta\to 0$, in which case $V_{t}=v$ and
	\begin{equation}\label{0dX}
		dX^{0}_{t}=rX^{0}_{t}dt+qX^{0}_{t}e^{v}dW_{t}^{1}.
	\end{equation}
	Varying $\delta$ sheds some light on the importance of the stochastic volatility equation for the worst-case scenario: when the variation of the volatility is slow, the market price of the asset is not very volatile, so this price remains stable; in the opposite case, it may becomes too volatile and therefore more risky.
	
	The rest of the paper is organised as follows: In Section \ref{sec:worstcase} we formulate the worst-case price scenario and the corresponding fully nonlinear partial differential equation of G-Hamilton-Jacobi-Bellman type (G-HJB equation), and we derive some basic properties such as moment bounds and the convergence of the worst-case price scenario as $\delta\to 0$; the Section also includes some technical results such as convergence of the second derivatives (Greeks).
	%
	In Section \ref{sec:2BSDE}, we consider the formulation of the fully nonlinear PDE for the nonlinear expectation of the price process and derive a uniform corrector result for the limit $\delta\to 0$ that complements the analysis of Section \ref{sec:worstcase}. We moreover formulate a concrete model that is solved numerically in Section \ref{sec:numerics}, using the deep learning method by Beck et al. \cite{beck2019machine} and exploiting the link between fully nonlinear G-HJB equations and 2BSDE. The main finding are summarised in Section \ref{sec:conclusion}.

	\section{Worst-case scenario price}\label{sec:worstcase}

	Let $\Theta=[\sigma_{\min},\sigma_{\max}]$. For any $\delta>0$, the worst-case scenario price at time $t < T$ is defined as
	\begin{equation}\label{Pdel}
		P^{\delta}:=P^{\delta}(t;x,v) = \exp(-r(T-t))\sup_{ q\in\Theta} E_{(t;x,v)}[h(X^{\delta}_{T})].
	\end{equation}
	If $\delta=0$, we define 
	\begin{equation}\label{P0}
		P^{0}:=P^{0}(t;x,v) = \exp(-r(T-t))\sup_{ q\in\Theta} E_{(t;x,v)}[h(X^{0}_{T})].
	\end{equation}
	Where $E_{(t;x,v)}[\cdot] $ is the conditional expectation given $\mathcal{F}_{t}$  with $X^{\delta}_{t}=x$ and $V_{t} = v$. \\
	

	\subsection{Moment bounds} 
	Instead of confining ourselves to perturbations of Black-Scholes prices as in \cite{fouque2011multiscale}, we will work with general terminal payoff (neither convex, nor concave ) as in \cite{fouque2018uncertain}. In this case the Hessian of the resulting option prices is indefinite and we have to impose additional regularity conditions on the payoff function $h$ to do some asymptotic analysis. Specifically, we suppose that the terminal payoff  $h$ is $C^4$ and gradient Lipschitz, and we impose the following polynomial growth conditions on the first four derivatives of $h$:
	\begin{equation}
		\left\{
		\begin{array}{rl}
			
			|h'(x)|\leq K_{1}, \qquad\qquad\quad\\
			|h''(x)|\leq K_{2}(1+|x|^{m}),\\
			
			|h'''(x)|\leq K_{3}(1+|x|^{n}), & \qquad\qquad\qquad (K_{i} \mbox{ for } i \in\{1,2,3,4\})  m, n \mbox{ and } l \in \mathbb{N},\\
			
			|h^{(4)}(x)|\leq K_{4}(1+|x|^{l}).
			
		\end{array}
		\right.
	\end{equation}

	Before we come to the convergence of $P^{\delta}$ as $\delta\to 0$, the next two propositions show that the processes  $X_{t}$ and $V_{t}$ have uniformly bounded moments of any order.
	
	\begin{proposition}\label{inq1} Let $0 \leq \delta \leq 1$, for $t \leq T$. The process $V_{t}$ has uniformly bounded  moments of any order
		\[
		\mathbb{E}_{(t,x,v)}\left[\int^{T}_{t} |V_{s}|^{k}ds\right] \leq \mathbb{E}_{(0,v)}\left[\int^{T}_{0} |V_{s}|^{k}ds\right] \leq C_{k}(T,v),
		\]
		where $C_{k}(T,v)$ independent of $\delta$.
	\end{proposition}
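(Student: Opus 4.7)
The plan is to reduce the statement to a uniform-in-$\delta$ bound on $\sup_{s\leq T}\mathbb{E}_{(0,v)}[|V_s|^k]$ and then establish this bound by an Itô/Gronwall argument in which the strong mean-reverting drift $-b e^{\alpha V_t}$ does all the work.

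First, I would dispose of the left-hand inequality by time-homogeneity. The SDE (\ref{deldV}) has coefficients that depend neither on $t$ nor on $X$, so under $\mathbb{E}_{(t,x,v)}$ the law of $(V_s)_{s\geq t}$ coincides with the law of $(V_{s-t})_{s\geq t}$ under $\mathbb{E}_{(0,v)}$. A change of variable $u=s-t$ followed by Tonelli gives
\[
\mathbb{E}_{(t,x,v)}\Big[\int_t^T |V_s|^k\,ds\Big] = \int_0^{T-t}\mathbb{E}_{(0,v)}[|V_u|^k]\,du \leq \int_0^{T}\mathbb{E}_{(0,v)}[|V_u|^k]\,du,
\]
where the last step uses non-negativity of the integrand. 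So everything reduces to showing that $M_k(T,v):=\sup_{\delta\in[0,1]}\sup_{s\in[0,T]}\mathbb{E}_{(0,v)}[|V_s|^k]$ is finite, from which $C_k(T,v):=T\cdot M_k(T,v)$ will do.

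Next, pick an even integer $2m\geq k$ and apply Itô's formula to $V_t^{2m}$, stopped at $\tau_N=\inf\{t\geq 0:|V_t|\geq N\}\wedge T$ to handle the non-Lipschitz drift $be^{\alpha V_t}$. Taking expectations kills the martingale part and yields
\[
\mathbb{E}[V_{t\wedge\tau_N}^{2m}] = v^{2m} + \delta\,\mathbb{E}\!\int_0^{t\wedge\tau_N}\!\Bigl(2m V_s^{2m-1}(a-be^{\alpha V_s}) + m(2m-1)\sigma^2 V_s^{2m-2}\Bigr)ds.
\]
The key pointwise estimate is that the map $x\mapsto -bx^{2m-1}e^{\alpha x}$ is bounded above on $\mathbb{R}$: for $x\geq 0$ it is non-positive, and for $x\leq 0$ it equals $b|x|^{2m-1}e^{-\alpha|x|}$, which attains its maximum at $|x|=(2m-1)/\alpha$. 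Hence there is a constant $K=K(m,b,\alpha)$ with $-bx^{2m-1}e^{\alpha x}\leq K$. Combining this with Young's inequalities $|x|^{2m-1}\leq 1+|x|^{2m}$ and $|x|^{2m-2}\leq 1+|x|^{2m}$, one obtains, with $\delta\leq 1$,
\[
\mathbb{E}[V_{t\wedge\tau_N}^{2m}]\leq v^{2m}+C_m+C_m\int_0^{t}\mathbb{E}[V_{s\wedge\tau_N}^{2m}]\,ds
\]
for some $C_m=C_m(a,b,\alpha,\sigma,m)$ independent of $\delta$ and $N$. Gronwall's inequality gives $\mathbb{E}[V_{t\wedge\tau_N}^{2m}]\leq (v^{2m}+C_mT)e^{C_mT}$, and sending $N\to\infty$ via Fatou yields a bound on $\mathbb{E}[V_t^{2m}]$ that is uniform in $(t,\delta)\in[0,T]\times[0,1]$.

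Finally, Jensen's inequality controls the lower moment by the higher one, $\mathbb{E}[|V_s|^k]\leq \mathbb{E}[V_s^{2m}]^{k/(2m)}$, so $M_k(T,v)\leq (v^{2m}+C_mT)^{k/(2m)}e^{C_mTk/(2m)}<\infty$, and setting $C_k(T,v)=T\cdot M_k(T,v)$ closes the argument. The only mildly delicate point is the stopping-time truncation, which is required because the super-exponential drift $-be^{\alpha V}$ forbids one from invoking standard moment estimates for SDEs with Lipschitz coefficients; once the truncation is in place, the mean-reversion-plus-Gronwall chain is routine.
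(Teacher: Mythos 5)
Your argument is correct, but it is genuinely different from what the paper does: the paper offers no self-contained proof at all and simply defers to Lemma~4.9 of Fouque et al., a moment bound established there for the (fast mean-reverting) Ornstein--Uhlenbeck-type driving process, whereas you prove the bound directly for the $\alpha$-hypergeometric dynamics \eqref{deldV}. Your route --- reduce by time-homogeneity and Tonelli to $\sup_{s\le T}\mathbb{E}_{(0,v)}[|V_s|^{2m}]$, then It\^o on $V^{2m}$ with localization, the pointwise bound $-bx^{2m-1}e^{\alpha x}\le K(m,b,\alpha)$ on all of $\mathbb{R}$, Young, Gronwall, Fatou, Jensen --- is sound, and the key observation that the super-exponential drift term has the right sign for $x\ge 0$ and is killed by $e^{-\alpha|x|}$ for $x\le 0$ is exactly what makes the estimate uniform in $\delta\in[0,1]$ (the drift and diffusion both carry factors of $\delta$, so $\delta\le 1$ suffices). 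What your approach buys is a proof tailored to the actual model at hand rather than an appeal to a result stated for a different volatility process; what the citation buys is brevity. Two small points you should make explicit: (i) the passage $N\to\infty$ via Fatou tacitly uses non-explosion of $V$ before $T$; this follows from your own uniform bound, since $N^{2m}\,\Prob(\tau_N\le t)\le \mathbb{E}[V_{t\wedge\tau_N}^{2m}]\le (v^{2m}+C_mT)e^{C_mT}$ forces $\Prob(\tau_N\le T)\to 0$, so it costs one extra line; (ii) in the Gronwall step one should note that on $\{s<\tau_N\}$ the integrand equals its stopped version, so that $\mathbb{E}\int_0^{t\wedge\tau_N}(1+V_s^{2m})\,ds\le T+\int_0^t\mathbb{E}[V_{s\wedge\tau_N}^{2m}]\,ds$, which is what you implicitly use. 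With these two remarks added, the proof is complete.
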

	
	\begin{proof} See Lemma $4.9$ in \cite{fouque2011multiscale}.
	\end{proof}
	
	\begin{lemma}
		For $\eta\in\mathbb{R}$ independent of $0 \leq \delta \leq \delta_0$, for some sufficiently small $\delta_0>0$, and $t \leq T$, the moment generating function of the integrated $\alpha$-hypergeometric process
		\[
		M^{\delta}_{v}(\eta):=\mathbb{E}_{(t,v)}[e^{\eta\int^{t}_{0} V_{s}ds}],\qquad\qquad \mbox{ for } \eta\in\mathbb{R},
		\] 
		is uniformly bounded, that is $|M^{\delta}_{v}(\eta)|\leq N(T,v,\eta) < \infty$, where $N(T,v,\eta)$ is independent of $t$.
	\end{lemma}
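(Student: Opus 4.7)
My plan is to reduce the moment generating function of the integrated process to an exponential moment of $V_s$ at a single time via Jensen's inequality, and then to establish uniform exponential moment bounds for $V_s$ by an It\^o--Grönwall argument. Applying Jensen with the uniform probability measure $ds/t$ on $[0,t]$ gives
\[
\exp\!\Bigl(\eta\int_0^t V_s\,ds\Bigr)\le \frac{1}{t}\int_0^t e^{\eta t V_s}\,ds\,,
\]
so it is enough to show that $\psi_K(s):=\mathbb{E}_v\bigl[e^{KV_s}\bigr]$ is bounded uniformly in $\delta\in[0,\delta_0]$ and $s\in[0,T]$ for every $K$ in the compact interval $[-|\eta|T,|\eta|T]$.

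Next, I would apply It\^o's formula to $e^{KV_s}$ (a standard localisation at the stopping times $\tau_n=\inf\{s:|V_s|>n\}$ supplies the integrability needed to take expectations and pass to the limit) to arrive at the integro-differential identity
\[
\psi_K'(s)=\Bigl(K\delta a+\tfrac12 K^2\delta\sigma^2\Bigr)\psi_K(s)-K\delta b\,\mathbb{E}_v\bigl[e^{(K+\alpha)V_s}\bigr].
\]
For $K\ge 0$ the last term is non-positive since $b>0$, so Grönwall's lemma yields the explicit bound $\psi_K(s)\le e^{Kv}\exp\bigl((K\delta_0|a|+\tfrac12 K^2\delta_0\sigma^2)T\bigr)$, which is uniform in $\delta\in[0,\delta_0]$ and $s\in[0,T]$. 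Here the non-linear mean reversion $-be^{\alpha V}$ provides the damping for free.

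The case $K<0$, which I expect to be the only non-routine step, is treated by induction exploiting the $\alpha$-hypergeometric structure. Writing $K=-\beta$ with $\beta>0$, the identity becomes
\[
\psi_{-\beta}'(s)=\bigl(-\beta\delta a+\tfrac12\beta^2\delta\sigma^2\bigr)\psi_{-\beta}(s)+\beta\delta b\,\mathbb{E}_v\bigl[e^{(\alpha-\beta)V_s}\bigr],
\]
in which the source term now has the \emph{opposite} sign to what was helpful before. For $\beta<\alpha$ the source has positive exponent $\alpha-\beta$ and is controlled by the previous paragraph, so Grönwall closes the estimate. For $\beta\in[n\alpha,(n+1)\alpha)$ with $n\ge 1$, the source equals $\psi_{-(\beta-\alpha)}(s)$ with $\beta-\alpha\in[(n-1)\alpha,n\alpha)$, which is bounded by the induction hypothesis, and Grönwall again yields a uniform bound for $\psi_{-\beta}$. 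After finitely many steps the entire compact $K$-interval is covered, and combining with the Jensen reduction delivers the claimed bound $N(T,v,\eta)$, independent of $t\le T$ and of $\delta\in[0,\delta_0]$. The bootstrap, which exploits the ``spectral gap'' of size $\alpha$ between the two exponents appearing in the It\^o identity, is where the specific structure of the $\alpha$-hypergeometric drift is used and is the main (though essentially combinatorial) obstacle.
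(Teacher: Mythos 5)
Your argument is correct, but it follows a genuinely different route from the paper. The paper imports a closed-form expression for the moment generating function of the integrated process (of the type $M^{\delta}_{v}(\eta)=\Psi(\eta,t)e^{-v\Xi(\eta,t)}$ with $\bar{b}=\sqrt{\delta^{2}-2\eta\delta\sigma^{2}}$, quoted from the literature) and then bounds $\Psi$ and $\Xi$ directly, splitting into the cases $\delta^{2}-2\eta\delta\sigma^{2}\ge 0$ and $<0$ and using elementary estimates on $\cosh$, $\sinh$, $\cos$, $\sin$; the smallness of $\delta_{0}$ enters only in the oscillatory case. You instead avoid any exact solvability: Jensen's inequality with the measure $ds/t$ reduces $\mathbb{E}[e^{\eta\int_{0}^{t}V_{s}ds}]$ to the single-time exponential moments $\psi_{K}(s)=\mathbb{E}_{v}[e^{KV_{s}}]$ for $K$ in the compact set $[-|\eta|T,|\eta|T]$, and the It\^o--Gr\"onwall identity together with the sign of the term $-K\delta b\,\mathbb{E}[e^{(K+\alpha)V_{s}}]$ (dropped for $K\ge 0$, bootstrapped in steps of size $\alpha$ for $K<0$) closes the estimate. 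What your approach buys is robustness and self-containedness: it uses only the one-sided structure of the drift $a-be^{\alpha v}$, works for any bounded $\delta$ without a smallness assumption, and would survive perturbations of the model for which no closed form exists; what it costs is explicitness of the constant $N(T,v,\eta)$, which the paper's formula delivers directly. Two small points you should make explicit if you write this up: (i) the localisation requires $\tau_{n}\to\infty$ a.s., i.e.\ non-explosion of $V$, which holds here because the drift is bounded above by $\delta a$ and tends to $-\infty$ superexponentially as $V\to+\infty$ (a standard Lyapunov or Feller-test argument); and (ii) in the induction for $K<0$ the Gr\"onwall constants depend continuously on $\beta$, so the bound is uniform over the compact interval of exponents, which is what the $t$-dependence of $K=\eta t$ forces you to control.
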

	\begin{proof}
		Following the reasoning of \cite[Sec.~5]{lepage2006continuous}, we have an explicit form of the moment generating function of the integrated $\alpha$-hypergeometric process: 
		$$
		M^{\delta}_{v}(\eta) = \Psi(\eta,t)e^{-v\Xi(\eta,t)},
		$$
		where
		\begin{eqnarray*}
			\Psi(\eta,t)&=&\left(\frac{\bar{b}e^{\delta \frac{t}{2}}}{\bar{b}\cosh(\bar{b}\frac{t}{2})+\delta \sinh(\bar{b}\frac{t}{2})}\right)^{\frac{2}{\sigma^{2}}},\\
			\Xi(\eta,t)&=&\left(\frac{2\eta\sinh(\bar{b}\frac{t}{2})}{\bar{b}\cosh(\bar{b}\frac{t}{2})+\delta\sinh(\bar{b}\frac{t}{2})}\right)^{\frac{2}{\sigma^{2}}},
		\end{eqnarray*}
		and
		$$\bar{b}=\sqrt{\hat{b}^{2}-2\eta\hat{\sigma}^{2}}=\sqrt{\delta^{2}-2\eta\delta\sigma^{2}}.$$
		In the following, we are going to show that $|M^{\delta}_{v}(\eta)| \leq N(T,v,\eta) < \infty$ , where $N(T,v,\eta)$ is independent of $\delta$ and $t$. To this end, we distinguish two cases: 
		\begin{itemize}
			\item[$\bullet$]  If $\delta^{2}-2\eta\delta\sigma^{2} \geq 0$, we have $\bar{b} \geq 0$ and
			$$
			\begin{array}{lllcc}
				\Psi(\eta,t) \leq \left( \frac{\bar{b}e^{\delta\frac{t}{2}}}{\bar{b}\cosh(\bar{b}\frac{t}{2})}\right)^{\frac{2}{\sigma^{2}}}, \qquad &\qquad   \delta\sinh(\bar{b}\frac{t}{2})\geq 0,\\
				\qquad\quad\leq \left(e^{\delta\frac{t}{2}} \right)^{\frac{2}{\sigma^{2}}}, \qquad & \qquad \cosh(\bar{b}\frac{t}{2})\geq 1, \\
				\qquad\quad\leq \left(e^{\frac{T}{2}} \right)^{\frac{2}{\sigma^{2}}}.
			\end{array}
			$$
			Since $\Xi(\eta,t) \geq 0$, we have $e^{-v\Xi(\eta,t)} \leq1$. Therefore
			$$
			M^{\delta}_{v}(\eta) = \Psi(\eta,t)e^{-v\Xi(\eta,t)}\leq\left(e^{\frac{T}{2}} \right)^{\frac{2}{\sigma^{2}}}.
			$$
			\item[$\bullet$]  If $\delta^{2}-2\eta\delta\sigma^{2} < 0$, let $\vartheta =\sqrt{2\eta\delta\sigma^{2}-\delta^{2}}$ which is positive. Then
			$$M^{\delta}_{v}(\eta) = \psi(\eta,t)e^{-v\Xi(\eta,t)},$$
			\begin{eqnarray*}
				&=& \left(\frac{i\vartheta e^{\delta \frac{t}{2}}}{i\vartheta\cosh(i\vartheta\frac{t}{2}) +\delta \sinh(i\vartheta\frac{t}{2})}\right)^{\frac{2}{\sigma^{2}}} e^{-v\left(\frac{2\eta\sinh(i\vartheta\frac{t}{2})}{i\vartheta\cosh(i\vartheta\frac{t}{2}) +\delta \sinh(i\vartheta\frac{t}{2})}\right)^{\frac{2}{\sigma^{2}}}},\\
				&=&\left(\frac{i\vartheta e^{\delta \frac{t}{2}}}{i\vartheta\cos(\vartheta\frac{t}{2}) +i\delta \sin(\vartheta\frac{t}{2})}\right)^{\frac{2}{\sigma^{2}}} e^{-v\left(\frac{2i\eta\sin(\vartheta\frac{t}{2})}{i\vartheta\cos(\vartheta\frac{t}{2}) +i\delta \sin(\vartheta\frac{t}{2})}\right)^{\frac{2}{\sigma^{2}}}},\\
				&=&\left(\frac{\vartheta e^{\delta \frac{t}{2}}}{\vartheta\cos(\vartheta\frac{t}{2}) +\delta \sin(\vartheta\frac{t}{2})}\right)^{\frac{2}{\sigma^{2}}} e^{-v\left(\frac{2\eta\sin(\vartheta\frac{t}{2})}{\vartheta\cos(\vartheta\frac{t}{2}) +\delta \sin(\vartheta\frac{t}{2})}\right)^{\frac{2}{\sigma^{2}}}}.
			\end{eqnarray*}
			Thus, for sufficiently small $\vartheta$, since
			$$
			\left(\frac{2\eta\sin(\vartheta\frac{t}{2} )}{\vartheta\cos(\vartheta\frac{t}{2})+\delta\sin(\vartheta\frac{t}{2})}\right)^\frac{2}{\sigma^{2}} \geq0,
			$$
			we have
			\begin{eqnarray*}
				M^{\delta}_{v}(\eta) &\leq& \left(\frac{\vartheta e^{\delta\frac{t}{2}}}{\vartheta \cos(\vartheta\frac{t}{2}) +\delta\sin(\vartheta\frac{t}{2})}\right)^\frac{2}{\sigma^{2}},\qquad\qquad\qquad\qquad\\
				&=&\left(\frac{\vartheta e^{\delta\frac{t}{2}}}{\vartheta(1 +\mathcal{O}(\vartheta^{2}t^{2}))+\delta(\frac{\vartheta t}{2} +\mathcal{O}(\vartheta^{3}t^{3}))}\right)^\frac{2}{\sigma^{2}},\\
				&=&\left(\frac{e^{\delta\frac{t}{2}}}{1 + \frac{\delta t}{2}+\mathcal{O}(\vartheta^{2}t^{2})}\right)^\frac{2}{\sigma^{2}}.\qquad\qquad\qquad\quad.
			\end{eqnarray*}
			As a consequence, there exists $\vartheta_{0}$ independent of  $t$, such that for $\vartheta< \vartheta_{0}$, \\  $M^{\delta}_{v}(\eta)\leq\left(\frac{e^{\frac{T}{2}}}{1+\frac{T}{2}}\right)^{\frac{2}{\sigma^{2}}}$.
		\end{itemize}
		This concludes the proof. 
	\end{proof}
	\begin{proposition}\label{prop2}
		Let $\delta\ge 0$ be sufficiently small and for $t \leq T$. Then the process $X_{t}$ has uniformly bounded  moments of arbitrary order. 
	\end{proposition}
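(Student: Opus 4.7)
The approach is to exploit the linearity of \eqref{deldX} in $X$ and represent the solution explicitly as a Dol\'eans-Dade exponential. Writing
\begin{equation*}
X_t = x\exp\!\Bigl(rt + q M_t - \tfrac{1}{2}q^2\langle M\rangle_t\Bigr),\qquad M_t := \int_0^t e^{V_s}\,dW_s^1,\quad \langle M\rangle_t = \int_0^t e^{2V_s}\,ds,
\end{equation*}
and raising to the power $p\ge 1$, the task becomes to estimate $\mathbb{E}\bigl[\exp\bigl(pq M_t - \tfrac{1}{2}pq^2\langle M\rangle_t\bigr)\bigr]$ uniformly in $\delta\in[0,\delta_0]$ and $t\in[0,T]$.

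The next step is to introduce a parameter $\lambda>1$ and factor
\begin{equation*}
\exp\!\Bigl(pq M_t - \tfrac{1}{2}pq^2\langle M\rangle_t\Bigr) = \bigl[\mathcal{E}(\lambda pq M)_t\bigr]^{1/\lambda}\exp\!\Bigl(\tfrac{1}{2}pq^2(\lambda p-1)\langle M\rangle_t\Bigr),
\end{equation*}
where $\mathcal{E}(\cdot)$ denotes the Dol\'eans-Dade exponential. Since $\mathcal{E}(\lambda pq M)_t$ is a nonnegative local martingale, hence a supermartingale with expectation bounded by $1$, H\"older's inequality with conjugate exponents $\lambda$ and $\lambda'=\lambda/(\lambda-1)$ gives
\begin{equation*}
\mathbb{E}[X_t^p]\ \le\ x^p e^{prT}\cdot\mathbb{E}\!\left[\exp\!\left(C\int_0^T e^{2V_s}\,ds\right)\right]^{1/\lambda'},\qquad C=\tfrac{1}{2}\lambda' pq^2(\lambda p-1),
\end{equation*}
so the problem reduces to proving the uniform bound $\sup_{\delta\in[0,\delta_0]}\mathbb{E}[\exp(C\int_0^T e^{2V_s}\,ds)]<\infty$.

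To establish this last estimate I would exploit the dissipative structure of \eqref{deldV}: the drift contains the strongly mean-reverting term $-\delta b e^{\alpha V_s}$, so a pathwise comparison shows that $V_t$ is stochastically dominated by $v+\delta a t+\sqrt{\delta}\sigma W_t^2$. In particular, for $\delta_0$ sufficiently small, $V_t$ has uniformly bounded Gaussian-type exponential moments on $[0,T]$. The exponential moment of the time integral then follows from Jensen's inequality in the form $\exp(C\int_0^T e^{2V_s}\,ds)\le T^{-1}\int_0^T \exp(CT e^{2V_s})\,ds$, combined with Fubini.

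The main technical obstacle lies precisely in this last exponential moment: the map $v\mapsto\exp(CT e^{2v})$ grows doubly exponentially, so one must carefully use both the $\sqrt{\delta}$-scaling of the fluctuations of $V_t$ (so that for small $\delta_0$ one effectively estimates a near-deterministic quantity) and the dissipation at $+\infty$ that suppresses rare excursions to large $V$. The smallness threshold $\delta_0$ in the statement must be chosen so that the Gaussian tail generated by the diffusion remains integrable against $\exp(CT e^{2v})$; this is exactly the coupling between the constants $C$, $\sigma$ and $\delta_0$ that dictates up to which moment order $p$ the argument closes, and it is the only nontrivial ingredient beyond Proposition~\ref{inq1} and the preceding lemma.
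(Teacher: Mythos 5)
Your representation of $X_t$ as a stochastic exponential and the subsequent H\"older factorisation with the supermartingale bound $\mathbb{E}[\mathcal{E}(\lambda p q M)_t]\le 1$ are correct, and in fact handle the martingale factor more carefully than the paper, which invokes Novikov's condition and then simply drops the martingale term. Up to the reduction to the uniform bound $\sup_{0\le\delta\le\delta_0}\mathbb{E}\bigl[\exp\bigl(C\int_0^T e^{2V_s}\,ds\bigr)\bigr]<\infty$, your argument runs parallel to the paper's proof.

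The gap is in the last step, and it is not merely technical: the route you propose leads to a quantity that is genuinely infinite. After the comparison $V_t\le v+\delta a t+\sqrt{\delta}\sigma W_t^2$ (which discards the dissipative term $-\delta b e^{\alpha V}$ entirely) and the Jensen/Fubini reduction, you are left with $\sup_{s\le T}\mathbb{E}[\exp(CTe^{2V_s})]$ where $V_s$ is dominated by a Gaussian. But a doubly exponential function of a nondegenerate Gaussian is never integrable: for any variance $\delta\sigma^2 s>0$ the integral $\int_{\mathbb{R}}\exp\bigl(CTe^{2v}-(v-\mu)^2/(2\delta\sigma^2 s)\bigr)\,dv$ diverges, since $e^{2v}$ dominates $v^2$ as $v\to+\infty$. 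Shrinking $\delta_0$ reduces the variance but cannot restore integrability, and the only mechanism that could conceivably help --- the strong negative drift $-\delta b e^{\alpha V}$ at large $V$ --- has been thrown away by your comparison and is in any case of order $\delta$. The Jensen step is also strictly lossy here: it replaces the integral functional $\int_0^T e^{2V_s}\,ds$, whose tail is governed by occupation times at high levels, by a pointwise double-exponential moment, which is the worst possible quantity to control. The paper takes an essentially different turn at exactly this point: it expands $e^{2V_s}=1+2V_s+\mathcal{O}(V_s^2)$ inside the exponent and reduces the estimate to the moment generating function $M^{\delta}_{v}(\eta)=\mathbb{E}[e^{\eta\int_0^t V_s\,ds}]$ of the \emph{integrated} process, for which the lemma preceding Proposition~\ref{prop2} gives an explicit formula and a bound uniform in $\delta$ and $t$ (Proposition~\ref{inq1} supplying the polynomial moments). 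To close your argument you need some version of that linearisation, or a direct tail estimate on $\int_0^T e^{2V_s}\,ds$ that actually exploits the dissipation, rather than a pointwise exponential moment of $e^{2V_s}$.
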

	
	\begin{proof}
		
		Let $X_{t},V_t$ satisfy (\ref{hyperGdelta}), with $q_{t}\in[\sigma_{\min},\sigma_{\max}]$. Then, for each finite $n\in\mathbb{N}$, 
		\begin{eqnarray*}
			X^{n}_{t}&=&x^{n}\exp\left(nrt-\frac{n}{2}\int^{t}_{0}(q_{s}e^{V_{s}})^{2}ds + n\int^{t}_{0}q_{s}e^{V_{s}}dW_{s}^{1}\right)\\
			&=&x^{n} \exp\left(nrt + \frac{n^{2}-n}{2}\int_{0}^{t}(q_{s}e^{V_{s}})^{2}ds\right)
			\exp\left(\frac{-n^{2}}{{2}}\int^{t}_{0}(q_{s}e^{V_{s}})^{2}ds + n\int^{t}_{0}q_{s}e^{V_{s}}dW_{s}^{1}\right)\\
			&\leq& x^{n} \exp\left(nrt + \frac{n^{2}-n}{2}\int_{0}^{t} \sigma_{\max}^{2}e^{2V_{s}}ds\right)\Lambda_{t}\,,
		\end{eqnarray*}
		where in the last step we assume Novikov's condition which implies that 
		$$\Lambda_{t}=\exp\left(\frac{-n^{2}}{{2}}\int^{t}_{0}(q_{s}e^{V_{s}})^{2}ds + n\int^{t}_{0}q_{s}e^{V_{s}}dW_{s}^{1}\right)$$
		is a martingale.\\
		
		\noindent
		Using Proposition~\ref{inq1}, we find  
		\begin{eqnarray*}
			\mathbb{E}_{(0,x,v)}\left[ \exp\left(\frac{1}{2}\int_{0}^{t}(nqe^{V_{s}})^{2}ds\right) \right]&\leq& \mathbb{E}_{(0,x,v)}\left[ \exp\left(\frac{n^{2} u^{2}}{2}\int_{0}^{t}e^{2V_{s}}ds\right) \right]\\
			&=&\mathbb{E}_{(0,x,v)}\left[ \exp\left(\frac{n^{2} \sigma_{\max}^{2}}{2}\int_{0}^{t}\left(1+2V_{s}+\mathcal{O}((2V_{s})^{2})\right) ds\right) \right]\\
			&=&\mathbb{E}_{(0,x,v)}\left[ \exp\left(\frac{n^{2} \sigma_{\max}^{2}}{2}\left[\int_{0}^{t}ds+2\int_{0}^{t}V_{s}ds+\int_{0}^{t}\mathcal{O}((2V_{s})^{2}) ds\right]\right) \right]\\
			&=&\mathbb{E}_{(0,x,v)}\left[ \exp\left(\frac{n^{2} \sigma_{\max}^{2}}{2}(t+C)+2\frac{n^{2} \sigma_{\max}^{2}}{2}\int_{0}^{t}V_{s}ds\right) \right]\\
			&=&\mathbb{E}_{(0,x,v)}\left[ \exp\left(\frac{n^{2} \sigma_{\max}^{2}}{2}(t+C)\right).\exp\left((n^{2} \sigma_{\max}^{2})\int_{0}^{t}V_{s}ds\right) \right]\\
			&=&\exp\left(\frac{n^{2} \sigma_{\max}^{2}}{2}(t+C)\right) \mathbb{E}_{(0,x,v)}\left[\exp\left((n^{2} \sigma_{\max}^{2})\int_{0}^{t}V_{s}ds\right) \right]\\
			&=&\exp\left(\frac{n^{2} \sigma_{\max}^{2}}{2}(t+C)\right) M^{\delta}_{v}(n^{2} \sigma_{\max}^{2})\\
			&<& \infty.
		\end{eqnarray*}
		Hence,
		\begin{eqnarray*}
			\mathbb{E}_{(0,x,v)}[X^{n}_{t}]&\leq& x^{n}\exp(nrt)\mathbb{E}_{(0,x,v)}\left[\exp\left(\frac{(n^{2}-n) \sigma_{\max}^{2}}{2}\int_{0}^{t}e^{2V_{s}}ds\right) \right],\\
			& = & x^{n}\exp(nrt)\exp\left(\frac{(n^{2}-n) \sigma_{\max}^{2}}{2}(t+C)\right) M^{\delta}_{v}\left((n^{2}-n) \sigma_{\max}^{2}\right),\\
			&\leq& x^{n}\exp(nrT)\exp\left(\frac{(n^{2}-n) \sigma_{\max}^{2}}{2}(T+C)\right) N\left(T,v,(n^{2}-n) \sigma_{\max}^{2}\right):=L,
		\end{eqnarray*}
		where the upper bound $L$ is independent of $\delta$ and $t$.
		
		Therefore,
		$$\mathbb{E}_{(t,x,v)}\left[\int^{T}_{t} |X_{s}|^{k}ds\right] \leq\mathbb{E}_{(0,x,v)}\left[\int^{T}_{0} |X_{s}|^{k}ds\right] \leq N_{k}(T,x,v),$$
		where $N_{k}(T,x,v)$ may depend on $(k,T,x,v)$ but not on $\delta$.
	\end{proof}

	\subsection{Convergence of the payoff}
	
	As a consequence of the previous results, we have the following convergence result for the asset process.

	\begin{proposition}\label{prop3}
		Assume there exists $C_{0}>0$, independent of $\delta$, such that $X^{\delta},$ $X^{0}$ being the solution of the SDEs (\ref{deldX}) and (\ref{0dX}) satisfy  
		$$E_{(t;x,v)}(X^{\delta}_{T}-X^{0}_{T})^{2} \leq C_{0}\delta\,.$$
	\end{proposition}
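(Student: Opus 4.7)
The plan is to derive the bound via an It\^o-based energy argument applied to the difference process $D_t := X_t^{\delta} - X_t^{0}$. Subtracting \eqref{deldX} from \eqref{0dX} and using $D_0 = 0$, one obtains
\begin{equation*}
dD_t = r D_t\,dt + q\,e^{V_t} D_t\,dW_t^{1} + q\,(e^{V_t}-e^{v})\,X_t^{0}\,dW_t^{1}.
\end{equation*}
Applying It\^o's formula to $D_t^{2}$ and taking expectation, with a standard localisation argument justified by the uniform moment bounds of Proposition~\ref{prop2}, yields the integral inequality
\begin{equation*}
E[D_T^{2}] \leq C_1 \int_0^{T} E[D_s^{2}]\,ds + C_2 \int_0^{T} E[e^{2V_s}D_s^{2}]\,ds + C_3 \int_0^{T} E\bigl[(e^{V_s}-e^{v})^{2}(X_s^{0})^{2}\bigr]\,ds,
\end{equation*}
with constants $C_i$ depending on $r,\sigma_{\max},T$ but independent of $\delta$.

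The key quantitative input is that $V_s - v$ is of order $\sqrt{\delta}$ in every $L^p$. From the SDE \eqref{deldV} one has
\begin{equation*}
V_s - v = \delta \int_0^{s}(a - b e^{\alpha V_u})\,du + \sqrt{\delta}\,\sigma\,W_s^{2},
\end{equation*}
so that, together with Proposition~\ref{inq1} and the exponential-moment estimates supplied by the preceding MGF lemma, $E[(V_s - v)^{2p}] \leq C_p \delta^{p}$ uniformly in $s \in [0,T]$ for $\delta$ small. The mean value theorem gives $|e^{V_s}-e^{v}| \leq (e^{V_s}+e^{v})|V_s-v|$, and hence by Cauchy--Schwarz together with Proposition~\ref{prop2} the forcing term is controlled:
\begin{equation*}
E\bigl[(e^{V_s}-e^{v})^{2}(X_s^{0})^{2}\bigr] \leq C\delta \quad\text{and}\quad E\bigl[(e^{V_s}-e^{v})^{4}\bigr] \leq C\delta^{2}.
\end{equation*}

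The only non-routine point is the random-coefficient term $E[e^{2V_s} D_s^{2}]$: a direct Cauchy--Schwarz bound degrades to a nonlinear Gronwall inequality that does not close at $O(\delta)$. The resolution is the elementary algebraic splitting
\begin{equation*}
e^{2V_s} = \bigl((e^{V_s}-e^{v})+e^{v}\bigr)^{2} \leq 2(e^{V_s}-e^{v})^{2} + 2e^{2v},
\end{equation*}
which, combined with the uniform-in-$\delta$ bound $\sup_{s\leq T} E[D_s^{4}] \leq \tilde C$ (itself a corollary of Proposition~\ref{prop2} applied separately to $X^{\delta}$ and $X^{0}$), produces
\begin{equation*}
E[e^{2V_s} D_s^{2}] \leq 2 e^{2v}\,E[D_s^{2}] + 2\sqrt{E[(e^{V_s}-e^{v})^{4}]\,E[D_s^{4}]} \leq 2 e^{2v}\,E[D_s^{2}] + C'\delta.
\end{equation*}
Substituting back yields a \emph{linear} integral inequality $E[D_T^{2}] \leq A \int_0^{T} E[D_s^{2}]\,ds + B\delta$, and the classical Gronwall lemma delivers $E[D_T^{2}] \leq C_0\,\delta$ with $C_0$ independent of $\delta$. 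The only genuine subtlety is thus the splitting of the random coefficient $e^{2V_s}$; the remainder is a routine combination of the moment bounds furnished by Propositions~\ref{inq1} and~\ref{prop2}.
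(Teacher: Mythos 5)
Your proof is correct, but it splits the diffusion term differently from the paper, and that difference is what generates (and then forces you to resolve) the random-coefficient issue. The paper writes $q e^{V_s}X^{\delta}_s - q e^{v}X^{0}_s = q e^{v}(X^{\delta}_s - X^{0}_s) + q(e^{V_s}-e^{v})X^{\delta}_s$, i.e.\ it freezes the volatility at $e^{v}$ on the difference and puts the perturbation $(e^{V_s}-e^{v})$ together with $X^{\delta}_s$ into the remainder; squaring then gives a Gronwall inequality whose coefficient $3Tr^2+3\sigma_{\max}^2 e^{2v}$ is deterministic, so the linear Gronwall closes immediately and the only work is the bound $E[(e^{V_s}-e^{v})^2(X^{\delta}_s)^2]\le C\delta$ (which the paper merely asserts from the moment bounds). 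You instead keep the random coefficient $e^{V_s}$ on $D_s$ and put $X^{0}_s$ in the remainder, which is why you need the extra step $e^{2V_s}\le 2(e^{V_s}-e^{v})^2+2e^{2v}$ combined with $\sup_s E[D_s^4]<\infty$ to linearize the Gronwall inequality --- a correct and self-contained fix, but one the paper's decomposition avoids entirely. On the other hand, your treatment of the forcing term is more explicit than the paper's: you actually derive $E[(V_s-v)^{2p}]\le C_p\delta^{p}$ from the rescaled SDE \eqref{deldV} and pass to $e^{V_s}-e^{v}$ via the mean value theorem, whereas the paper leaves this to the reader. Note that both your argument and the paper's implicitly require uniform bounds on pointwise exponential moments $E[e^{cV_s}]$ (not just the integrated moment generating function of the lemma preceding Proposition~\ref{prop2}); this gap is inherited from the paper rather than introduced by you, but it is worth flagging if you want the $E[(e^{V_s}-e^{v})^4]\le C\delta^2$ step to be airtight.
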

	
	\begin{proof}  Since $X^{\delta}_{t},$ $X^{0}_{t}$ solve (\ref{deldX}), (\ref{0dX}), we have
		\begin{equation*}
			X^{\delta}_{T}=x +\int^{T}_{t}rX^{\delta}_{s}ds +\int^{T}_{t}qe^{V_{s}}X^{\delta}_{s}dW_{s}^{1},
		\end{equation*}
		and
		\begin{equation*}
			X^{0}_{T}=x +\int^{T}_{t}rX^{0}_{s}ds +\int^{T}_{t}qe^{v}X^{0}_{s}dW_{s}^{1}\,,
		\end{equation*}
		which can be combined to give 
		\begin{eqnarray*}
			X^{\delta}_{T}-X^{0}_{T} &=&\int^{T}_{t}r(X^{\delta}_{s}-X^{0}_{s})ds +\int^{T}_{t}q(e^{V_{s}}X^{\delta}_{s}-e^{v}X^{0}_{s})dW_{s}^{1}\\
			&=&\int^{T}_{t}r(X^{\delta}_{s}-X^{0}_{s})ds +\int^{T}_{t}qe^{v}(X^{\delta}_{s}-X^{0}_{s})dW_{s}^{1} +\int^{T}_{t}q(e^{V_{s}} -e^{v})X^{\delta}_{s}dW_{s}^{1}\,.
		\end{eqnarray*}
		Now let $Y_{s}=X^{\delta}_{s}-X^{0}_{s}$, then $Y_{t} = 0$ and
		\begin{equation*}
			Y_{T} =\int^{T}_{t}rY_{s}ds +\int^{T}_{t}qe^{v}Y_{s}dW_{s}^{1} +\int^{T}_{t}q(e^{V_{s}}-e^{v})X^{\delta}_{s}dW_{s}^{1}.
		\end{equation*}
		Thus,
		\begin{eqnarray*}
			E_{(t;x,v)}[Y^{2}_{T}]&\leq& 3E_{(t,x,v)}\left[\left(\int^{T}_{t}rY_{s}ds\right)^{2} +\left(\int^{T}_{t}qe^{v}Y_{s}dW_{s}^{1}\right)^{2} +\left(\int^{T}_{t}q(e^{V_{s}} -e^{v})X^{\delta}_{s}dW_{s}^{1}\right)^{2}\right]\\
			&\leq & \int^{T}_{t}(3Tr^{2}+3\sigma_{\max}^{2}e^{2v})E_{(t,x,v)}[Y^{2}_{s}]ds + \underbrace{3\sigma_{\max}^{2}\int^{T}_{t}E_{(t;x,v)}\left[(e^{V_{s}} -e^{v})^{2}(X^{\delta}_{s})^{2}\right]ds}_{R(\delta)}\,.
		\end{eqnarray*}
		We have seen before that $X_{t}$ and $V_{t}$ have uniformly bounded moments for $\delta$ sufficiently small.
		We can therefore show that $|R(\delta)|\leq C\delta$ for $C$ independent of $\delta$.
		Setting $q=\sigma_{\max}$ and using Gronwall's inequality, the previous inequality can be recast as 
		$$f(T) \leq\int^{T}_{t}\lambda f(s)ds + C\delta \leq \delta\int^{T}_{t}C\lambda e^{\lambda(T-s)}ds + C\delta \,,$$
		where 
		$f(T)=E_{(t;x,v)}(Y^{2}_{T})$ and $\lambda= 3Tr^{2}+3\sigma_{\max}^{2}e^{2v}>0$. 
		As a consequence,
		$$E_{(t;x,v)}(X^{\delta}_{T} -X^{0}_{T})^{2} = E_{(t;x,v)}Y^{2}_{T} = f(T) \leq C_{0}\delta\,.$$
	\end{proof}
	

	\begin{theorem}\label{theo1}
		The function $P^{\delta}$ uniformly converges to $P^{0}$ with rate $\sqrt{\delta}$ as $\delta\to 0$, where the  convergence is uniform on any compact subset of $[0,T]\times\mathbb{R}\times\mathbb{R}^{+}$, 
	\end{theorem}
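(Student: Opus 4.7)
The plan is to reduce the statement directly to Proposition~\ref{prop3}, exploiting that the payoff $h$ is Lipschitz (a consequence of the bound $|h'(x)|\le K_1$ imposed in the moment bounds subsection). The whole argument should essentially be a three-line estimate followed by a uniformity check on compacts.

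First, I would control the discounting factor $e^{-r(T-t)}$, which is uniformly bounded for $t\in[0,T]$, so it can be absorbed into an overall constant. The key step is then the standard sup-difference inequality
\[
\left|\sup_{q\in\Theta}E_{(t;x,v)}[h(X^{\delta}_T)]-\sup_{q\in\Theta}E_{(t;x,v)}[h(X^{0}_T)]\right|\;\le\;\sup_{q\in\Theta}\bigl|E_{(t;x,v)}[h(X^{\delta}_T)-h(X^{0}_T)]\bigr|,
\]
which holds for any two real-valued functions whose supremum is finite, and which brings the supremum inside the absolute value. Since the hypotheses on $h$ give $|h(x)-h(y)|\le K_1|x-y|$, Jensen's (or Cauchy--Schwarz's) inequality then yields
\[
\sup_{q\in\Theta}E_{(t;x,v)}\bigl|h(X^{\delta}_T)-h(X^{0}_T)\bigr|
\;\le\; K_1\sup_{q\in\Theta}\sqrt{E_{(t;x,v)}\bigl[(X^{\delta}_T-X^{0}_T)^2\bigr]}.
\]

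Second, I would plug in Proposition~\ref{prop3} to bound the right-hand side by $K_1\sqrt{C_0\,\delta}$. Combining with the discount factor gives the pointwise estimate
\[
|P^{\delta}(t;x,v)-P^{0}(t;x,v)|\;\le\;e^{-r(T-t)}K_1\sqrt{C_0(t,x,v)}\,\sqrt{\delta},
\]
which already exhibits the $\sqrt{\delta}$ rate claimed in the theorem.

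Third, I would verify the uniformity on compact subsets of $[0,T]\times\mathbb{R}\times\mathbb{R}^{+}$. Inspecting the proof of Proposition~\ref{prop3}, the constant $C_0$ arises from Gronwall's inequality applied with $\lambda=3Tr^{2}+3\sigma_{\max}^{2}e^{2v}$ and from the remainder term $R(\delta)$, both of which depend continuously on $(t,x,v)$ through the moment bounds of Proposition~\ref{inq1} and Proposition~\ref{prop2}. On any compact $K\subset[0,T]\times\mathbb{R}\times\mathbb{R}^{+}$ these constants can be majorised by a single finite $C_K$, so that
\[
\sup_{(t,x,v)\in K}|P^{\delta}(t;x,v)-P^{0}(t;x,v)|\;\le\;K_1\sqrt{C_K}\,\sqrt{\delta},
\]
and sending $\delta\to 0$ concludes the proof. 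The only nontrivial step I anticipate is making this last uniformity precise: one has to trace back how $C_0$ depends on the initial data in Proposition~\ref{prop3}, and confirm that the bound $|R(\delta)|\le C\delta$ derived there can be made uniform in $(t,x,v)\in K$ using the $(t,x,v)$-uniform moment bounds already established.
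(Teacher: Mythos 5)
Your proposal is correct and follows essentially the same route as the paper's proof: the sup-difference inequality, the Lipschitz bound $|h(x)-h(y)|\le K_1|x-y|$, Cauchy--Schwarz, and then Proposition~\ref{prop3} to obtain the $\sqrt{\delta}$ rate. Your third step, tracing how the Gronwall constant $C_0$ depends on $(t,x,v)$ to justify uniformity on compacts, is a point the paper leaves implicit, but it is a refinement of the same argument rather than a different approach.
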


	\begin{proof}
		Due to the Lipschitz continuity of $h$, the Cauchy-Schwartz inequality and Proposition\thinspace \ref{prop3}, we get
		\begin{eqnarray*}
			|P^{\delta} -P^{0}|&=&\exp(-r(T-t))\left|\sup_{q\in\Theta} E_{(t;x,v)}[h(X^{\delta}_{T})]-\!\!\sup_{q\in\Theta} E_{(t;x,v)}[h(X^{0}_{T})]\right|,\\
			&\leq& \exp(-r(T-t))\sup_{q\in\Theta}\left|E_{(t;x,v)}[h(X^{\delta}_{T})]-E_{(t;x,v)}[h(X^{0}_{T})]\right|,\\
			&\leq& \exp(-r(T-t))\sup_{q\in\Theta} E_{(t;x,v)}\left|h(X^{\delta}_{T})-h(X^{0}_{T})\right|,\\
			&\leq& K_{0} \exp(-r(T-t))\sup_{q\in\Theta} E_{(t;x,v)}\left|X^{\delta}_{T} -X^{0}_{T}\right|,\\
			&\leq& K_{0} \exp(-r(T-t))\sup_{q\in\Theta}\left[E_{(t;x,v)}(X^{\delta}_{T} -X^{0}_{T})^{2}\right]^{1/2}.
		\end{eqnarray*}
		This entails
		\[
		|P^{\delta} -P^{0}|\leq C_{1}\sqrt{\delta}\,
		\]
		and concludes the proof.
	\end{proof}
	\subsection{Pricing G-PDE}
	The worst-case scenario price $P^{\delta}$ is the solution to the following Hamilton-Jacobi-Bellman $(HJB)$ equation with terminal condition $P^{\delta}(T;x,v) = h(x)$ (see \cite{linos1983optimal,lions1983optimal}): 
	\begin{equation}\label{NPDE}
		-\partial_{t} P^{\delta} = r\left( x\partial_{x}P^{\delta}-P^{\delta} \right)+\sup_{q\in\Theta}\left \{ \frac{1}{2} x^{2}q^{2}e^{2v}\partial_{xx}^{2}P^{\delta}+\sqrt{\delta}qxe^{v}\sigma\rho \partial_{xv}^{2}P^{\delta}\right \} 
		+\delta( \frac{1}{2} \sigma^{2} \partial_{vv}^{2}P^{\delta} + (a-b e^{\alpha v}) \partial_{v}P^{\delta}),
	\end{equation}
	Throughout the rest of the paper, we set $r=0$, i.e. we asusme that the return of the asset is zero, but the return of the option depends on the volatility. In other words, even though the financial asset has no return, the option can have it.\\
	
	\noindent
	\textbf{Leading order term $P_{0}$:} To approximate the value function $P^{\delta}$, we use the regular perturbation expansion
	\begin{equation}\label{erroP}
		P^{\delta}=P_{0}+\sqrt{\delta}P_{1} + \delta P_{2}+\dots,
	\end{equation}
	where $P_{0}$ the leading order term and $P_{1}:=P_{1}(t,x,v)$ the first correction for the approximation of the worst-case scenario price $P^{\delta}$. 
	Substituting (\ref{erroP}) in (\ref{NPDE}), and using Theorem\thinspace\ref{theo1}, the leading order term $P_{0}$ is found to be the solution to
	\begin{equation}\label{EDP0}
		- \partial_{t}P_{0}  =  \sup _{q\in\Theta}\left\{\frac{1}{2}q^{2}e^{2v}x^{2}\partial^{2}_{xx}P_{0}\right\}\,,\quad P_{0}(T;x,v) = h(x),
	\end{equation}
	
	\subsection{Convergence of the second partial derivative}
	The gamma $\partial^{2}_{xx}P^{\delta}$ represents the convexity of the price of an option according to the price of the underlying asset. It indicates whether the price of the option tends to move faster or slower than the price of the underlying asset. Using the fact that $q \in [\sigma_{\min}, \sigma_{\max}]$ , and the regularity results for uniformly parabolic equations which are referenced in \cite{crandall2000lp},\cite{guyon2013nonlinear}, we conclude that  (\ref{NPDE}) is uniformly parabolic.
	
	\begin{proposition}\label{prop4}
		As $\delta\to 0$, the second partial derivative $\partial^{2}_{xx}P^{\delta}$ converges uniformly to $\partial^{2}_{xx}P_{0}$ on any compact subset of $[0,T]\times\mathbb{R}\times\mathbb{R}^{+}$ and with rate $\sqrt\delta$.
	\end{proposition}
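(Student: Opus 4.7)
The strategy is to combine the $\sqrt{\delta}$ pointwise convergence from Theorem~\ref{theo1} with interior regularity estimates for uniformly parabolic equations to upgrade the convergence to the second $x$-derivative. Because $q\in[\sigma_{\min},\sigma_{\max}]$ with $\sigma_{\min}>0$, both HJB equations (\ref{NPDE}) and (\ref{EDP0}) are uniformly parabolic in $x$, and the supremum in (\ref{NPDE}) is attained at a measurable maximiser $q^{\star}_{\delta}(t,x,v)$; substituting this maximiser shows that $P^{\delta}$ also solves the \emph{linear} uniformly parabolic equation obtained by fixing $q=q^{\star}_{\delta}$, and similarly for $P_{0}$ with a maximiser $q^{\star}_{0}$.

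Setting $u^{\delta}:=P^{\delta}-P_{0}$ and using the standard HJB sub/supersolution comparison (testing the equation for $P^{\delta}$ with the maximiser $q^{\star}_{0}$ and the equation for $P_{0}$ with $q^{\star}_{\delta}$), one can bracket $u^{\delta}$ between solutions of a linear uniformly parabolic equation with identical principal part and a residual inhomogeneity of the form
\[
f_{\delta}=\sqrt{\delta}\,q\,xe^{v}\sigma\rho\,\partial^{2}_{xv}P^{\delta}+\delta\left(\tfrac{1}{2}\sigma^{2}\partial^{2}_{vv}P^{\delta}+(a-be^{\alpha v})\partial_{v}P^{\delta}\right).
\]
Interior $W^{2,p}$ estimates of the type referenced in \cite{crandall2000lp,guyon2013nonlinear}, applied first to $P^{\delta}$ on its own, provide a priori bounds for $\partial^{2}_{xv}P^{\delta}$, $\partial^{2}_{vv}P^{\delta}$ and $\partial_{v}P^{\delta}$ on compact subsets that are uniform in $\delta$; combined with the explicit $\sqrt{\delta}$ and $\delta$ prefactors this gives $\|f_{\delta}\|_{L^{\infty}_{\mathrm{loc}}}\leq C\sqrt{\delta}$.

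Applying the same interior parabolic regularity to the linear equation satisfied by $u^{\delta}$ yields, on any pair of compacta $K\subset K'\subset[0,T]\times\mathbb{R}\times\mathbb{R}^{+}$, an estimate
\[
\|\partial^{2}_{xx}u^{\delta}\|_{L^{\infty}(K)}\leq C\bigl(\|u^{\delta}\|_{L^{\infty}(K')}+\|f_{\delta}\|_{L^{\infty}(K')}\bigr),
\]
which combined with the bound $\|u^{\delta}\|_{L^{\infty}}\leq C_{1}\sqrt{\delta}$ from Theorem~\ref{theo1} gives the claimed $\sqrt{\delta}$ rate. The step I expect to be the main obstacle is extracting an $L^{\infty}$ bound on $\partial^{2}_{xx}u^{\delta}$ from the $W^{2,p}$ estimate, since this requires either a Schauder bootstrap based on H\"older continuity of the coefficients of the linearised operator (which amounts to H\"older control of the measurable maximiser $q^{\star}_{\delta}$) or a direct $C^{2,\alpha}$ estimate of Evans--Krylov type exploiting the convexity of the HJB Hamiltonian in the Hessian; in either case the constants must be uniform in $\delta$, and the moment bounds on $V_{t}$ from Proposition~\ref{inq1} are what keep the coefficient $e^{V}$ of the leading-order term under control.
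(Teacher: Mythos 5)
Your route is genuinely different from the paper's. The paper's proof is two lines long: it invokes \cite[Thm.~5.2.5]{giga2010nonlinear} to conclude that $P^{\delta}(t,\cdot,\cdot)\in C^{1,2,2}_{p}$ and that $\partial_{x}P^{\delta}$ and $\partial^{2}_{xx}P^{\delta}$ are bounded uniformly in $\delta$, and then asserts that the claim ``follows from Theorem~\ref{theo1}''. You instead attempt a quantitative argument: write $u^{\delta}=P^{\delta}-P_{0}$, isolate the $\sqrt{\delta}$- and $\delta$-weighted terms as a small inhomogeneity, and apply interior parabolic regularity to the resulting equation for $u^{\delta}$. Your instinct here is sound, and in fact it exposes the weakness of the paper's own argument: a uniform bound on $\partial^{2}_{xx}P^{\delta}$ together with $\|P^{\delta}-P_{0}\|_{\infty}=\mathcal{O}(\sqrt{\delta})$ does not by itself give convergence of the second derivative at rate $\sqrt{\delta}$ (an interpolation of Landau--Kolmogorov type would degrade the rate, and boundedness alone gives no rate at all). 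Going through the equation is the right way to get a rate.

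That said, your argument has a genuine gap beyond the one you flag. The sub/supersolution bracketing (testing each HJB equation with the other's maximiser) yields only differential \emph{inequalities} for $u^{\delta}$, i.e.\ membership in a Pucci-type class with right-hand side $\mathcal{O}(\sqrt{\delta})$; that class admits interior $C^{1,\alpha}$ (Krylov--Safonov) estimates but not $W^{2,p}$ or $C^{2}$ estimates, so you cannot extract $\|\partial^{2}_{xx}u^{\delta}\|_{L^{\infty}(K)}\leq C\bigl(\|u^{\delta}\|_{L^{\infty}(K')}+\|f_{\delta}\|_{L^{\infty}(K')}\bigr)$ from the bracketing alone. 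If you instead subtract the two equations with their respective maximisers to obtain a genuine equation for $u^{\delta}$, the residual acquires the additional term $\tfrac{1}{2}\bigl((q^{*,\delta})^{2}-(q^{*,0})^{2}\bigr)e^{2v}x^{2}\partial^{2}_{xx}P_{0}$, which is absent from your $f_{\delta}$. Since $q^{*,\delta}$ and $q^{*,0}$ are bang-bang and differ exactly on the set $A^{\delta}_{t,v}$ where the two gammas have opposite signs, this term is $\mathcal{O}(1)$ in $L^{\infty}$ unless one first shows that $\partial^{2}_{xx}P_{0}$ is already small there --- which is essentially the content of the later estimate on $I_{0}$ and cannot be presupposed when proving Proposition~\ref{prop4}. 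Finally, note that the operator is uniformly parabolic only in the $x$-direction (the $v$-direction degenerates as $\delta\to0$) and the maximiser $q^{*,\delta}$ is a priori only measurable, so the Schauder or Evans--Krylov step you identify as ``the main obstacle'', with constants uniform in $\delta$, is indeed not closed. As written, the proposal does not yet establish the stated rate.
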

	\begin{proof}
		The function $h\in C^4$ is gradient Lipschitz and satisfies polynomial growth conditions in its first four derivatives. By \cite[Thm.~5.2.5]{giga2010nonlinear}, we conclude 
		\begin{itemize}
			\item $
			P^{\delta}(t,.,.)\in C^{1,2,2}_{p}  \mbox{ for } \delta \mbox{ fixed }$
			\item $
			\partial_{x} P^{\delta}(t,.,.) \mbox{ and } \partial^{2}_{xx} P^{\delta}(t,.,.) \mbox{ are uniformly bounded in } \delta
			$
		\end{itemize}
		The assertion thus follows from Theorem\thinspace\ref{theo1}.
	\end{proof}
	
	\noindent
	\textbf{Optimal controls:} Following \cite{fouque2018uncertain}, we define $S^{0}_{t,v}$ to be the zero level set of $\partial^{2}_{xx}P_{0}$
	and the set $A^{\delta}_{t,v}$ to be the set on which $\partial^{2}_{xx}P^{\delta}$ and $\partial^{2}_{xx}P_{0}$ have  different signs, i.e. 
	$$S^{0}_{t,v} := \{x = x(t,v) \in\mathbb{R}^{+}|\partial^{2}_{xx}P_{0}(t;x,v) = 0\}.$$
	and 
	\begin{equation}\label{A}
		A^{\delta}_{t,v} :=\{x = x(t,v)|\partial^{2}_{xx}P^{\delta}(t;x,v) > 0 , \partial^{2}_{xx}P_{0}(t;x,v)< 0\}\,. 
	\end{equation}
	
	\begin{lemma}\label{lem2}
		Call 
		\begin{eqnarray}\label{qdelta}
			q^{*,\delta}(t;x,v) :=\arg{\max}_{q\in\Theta}\left\{\frac{1}{2}q^{2}e^{2v}x^{2}\partial^{2}_{xx}P^{\delta}+\sqrt{\delta}(q\rho\sigma e^{v}x\partial^{2}_{xv}P^{\delta})\right\},
		\end{eqnarray}
		for $x\not\in S^{0 }_{t,v}$ and $\delta>0$ sufficiently small, and 
		\begin{eqnarray} \label{qq0}
			q^{*,0}(t;x,v) :=\arg{\max}_{q\in\Theta}\left\{\frac{1}{2}q^{2}e^{2v}x^{2}\partial^{2}_{xx}P^{0}\right\},
		\end{eqnarray}
		for $\delta=0$. Moreover, let (\ref{qdelta}) and (\ref{qq0}) denote the optimal controls in the G-PDE (\ref{NPDE}) for   $P^{\delta}$ and in the G-PDE (\ref{EDP0}) for $P_{0}$, respectively. Then the limiting optimal control as $\delta\to 0$ is given by 
		\begin{equation}\label{q0}
			q^{*,0}(t;x,v) =
			\begin{cases}
				\sigma_{\max}\,, & \partial^{2}_{xx}P_{0}\geq0,\\
				\sigma_{\min}\,, & \partial^{2}_{xx}P_{0}<0.
			\end{cases}
		\end{equation}
	\end{lemma}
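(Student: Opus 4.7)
The plan is to evaluate the argmax in (\ref{qq0}) directly for the limit problem and then use the convergence results of the previous subsection to identify $q^{*,0}$ as the $\delta\to 0$ limit of $q^{*,\delta}$ on $\{x\notin S^{0}_{t,v}\}$.

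First I would observe that the objective in (\ref{qq0}) factors as $\tfrac{1}{2}\, e^{2v} x^{2}\, q^{2}\, \partial^{2}_{xx} P_{0}$, where the prefactor $\tfrac{1}{2}e^{2v}x^{2}$ is strictly positive for $(x,v)\in\mathbb{R}^{+}\times\mathbb{R}$. Because $q$ ranges over $[\sigma_{\min},\sigma_{\max}]\subset(0,\infty)$, the map $q\mapsto q^{2}$ is strictly increasing on this interval, so the sign of $\partial^{2}_{xx}P_{0}$ alone dictates the direction of optimisation: if $\partial^{2}_{xx}P_{0}\ge 0$ the objective is non-decreasing in $q^{2}$ and is maximised at $q=\sigma_{\max}$; if $\partial^{2}_{xx}P_{0}<0$ it is strictly decreasing in $q^{2}$ and is maximised at $q=\sigma_{\min}$. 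This gives (\ref{q0}) immediately, with the convention $q^{*,0}=\sigma_{\max}$ on the (measure-zero) set $\{\partial^{2}_{xx}P_{0}=0\}$, where every $q\in\Theta$ is a maximiser.

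Second, to justify calling this the \emph{limiting} optimal control, i.e.\ $q^{*,\delta}\to q^{*,0}$ off $S^{0}_{t,v}$, I would combine Proposition \ref{prop4} (which provides $\partial^{2}_{xx}P^{\delta}\to\partial^{2}_{xx}P_{0}$ uniformly on compacts at rate $\sqrt{\delta}$) with the analogous uniform-in-$\delta$ bound on the mixed partial $\partial^{2}_{xv}P^{\delta}$ inherited from the same parabolic regularity estimates used in the proof of Proposition \ref{prop4}. The cross term $\sqrt{\delta}\,q\rho\sigma e^{v}x\,\partial^{2}_{xv}P^{\delta}$ in (\ref{qdelta}) is then of order $\sqrt{\delta}$, while for $x\notin S^{0}_{t,v}$ the coefficient $\partial^{2}_{xx}P^{\delta}$ retains the strict sign of $\partial^{2}_{xx}P_{0}$ for all sufficiently small $\delta$. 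The quadratic-in-$q$ term therefore dominates the $O(\sqrt{\delta})$ linear perturbation on $\Theta$, and the argmax is pushed to the same boundary point of $\Theta$ as in the unperturbed problem, yielding the claimed formula.

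The main technical obstacle is the behaviour near the zero level set $S^{0}_{t,v}$: once $|\partial^{2}_{xx}P_{0}|$ drops to order $\sqrt{\delta}$, the leading quadratic coefficient and the cross term become comparable and $q^{*,\delta}$ can switch between $\sigma_{\min}$ and $\sigma_{\max}$, so convergence to (\ref{q0}) can fail. This is precisely why the lemma restricts (\ref{qdelta}) to $x\notin S^{0}_{t,v}$; on the singular set the limiting optimiser is genuinely ambiguous and must be analysed separately in the subsequent perturbation expansion for $P^{\delta}$.
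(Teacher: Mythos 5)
Your proposal is correct and follows essentially the same route as the paper: both reduce the objective to a quadratic in $q$ whose boundary maximiser is dictated by the sign of $\partial^{2}_{xx}P^{\delta}$ (resp.\ $\partial^{2}_{xx}P_{0}$), treating the mixed-derivative term as an $O(\sqrt{\delta})$ perturbation that cannot move the argmax off the boundary for $x\notin S^{0}_{t,v}$. The only cosmetic difference is that the paper rules out an interior maximiser by exhibiting the stationary point $\hat q^{*,\delta}=-\rho\sqrt{\delta}\sigma\,\partial^{2}_{xv}P^{\delta}/(xe^{v}\partial^{2}_{xx}P^{\delta})$ explicitly and noting it exits $\Theta$ as $\delta\to 0$, whereas you argue by direct domination of the linear term; both versions tacitly require the uniform-in-$\delta$ bound on $\partial^{2}_{xv}P^{\delta}$ that you correctly flag.
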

	\begin{proof}
		Let $$f(q):=\frac{1}{2}q^{2}e^{2v}x^{2}\partial^{2}_{xx}P^{\delta}+\sqrt{\delta}(q\rho\sigma e^{v}x\partial^{2}_{xv}P^{\delta}).$$
		and suppose that the maximiser $\hat{q}^{*,\delta}$ is in the interior of the interval $[\sigma_{\min},\sigma_{\max}]$. 
		Then, 
		for $x\not\in S^{0}_{t,v}$, we have 
		$$\hat{q}^{*,\delta}=\frac{-\rho\sqrt\delta\sigma\partial^{2}_{xv}P^{\delta}}{xe^{v}\partial^{2}_{xx}P^{\delta}}.$$
		for the maximiser of $f(q)$. But since $f(\hat{q}^{*,\delta})\to 0$ as $\delta\to 0$, the maximiser must be on the boundary whenever $\delta$ is sufficiently small. In this case, since the sign of $\partial^{2}_{xx}P^{\delta}$ determines the sign of the coefficient of the $q^{2}$ term in $f(q)$, we have $q^{*,\delta}\to q^{*,0}$ pointwise on $S^{0}_{t,v}$ where, for any sufficiently small $\delta\ge 0$, the maximiser can be represented by  
		$$q^{*,\delta}=\sigma_{\max}\mathbf{1}_{\{\partial^{2}_{xx}P^{\delta}\geq 0\}}+\sigma_{\min}\mathbf{1}_{\{\partial^{2}_{xx}P^{\delta}<0\}}.$$
	\end{proof}
	
	Lemma\thinspace\ref{lem2} allows us to rewrite the G-HJB equation (\ref{NPDE}) as
	
	\begin{equation}\label{EDPqdel}
		= \partial_{t}P^{\delta} = \frac{1}{2}(q^{*,\delta})^{2}e^{2v}x^{2}\partial^{2}_{xx}P^{\delta}
		+\sqrt\delta(q^{*,\delta}\rho\sigma e^{v}x\partial^{2}_{xv}P^{\delta})+\delta(\frac{1}{2} \sigma^{2}\partial^{2}_{vv}P^{\delta}+ (a-b e^{\alpha v})\partial_{v}P^{\delta}),
	\end{equation}
	with terminal condition $P^{\delta}(T;x,v) = h(x)$ and with $q^{*,\delta}$ as given above.

	\subsection{First-order corrector for the limit payoff}
	We will now derive a corrector result for the difference $P^\delta-P^0$. To this end, 
	recall that $P_{1}$, the first order correction term of $P^{\delta}$, is the solution to the linear equation
	\begin{equation}\label{EDP1}
		- \partial_{t}P_{1}  = \frac{1}{2}(q^{*,0})^{2}e^{2v}x^{2}\partial^{2}_{xx}P_{1} + q^{*,0}\rho\sigma e^{v}x\partial^{2}_{xv}P_{0}\,,\quad P_{1}(T,x,v) =0\,,
	\end{equation}
	where $q^{*,0}$ is given by (\ref{q0}). Further recall that vanna $\partial^{2}_{xv}P^{\delta}$ is a second order derivative of the option, once to the underlying asset price and once to volatility. It is the sensitivity of the option delta with respect to change in volatility, or, alternatively, the it is the sensitivity of vega $\partial^{2}_{v}P^{\delta}$  with respect to the underlying asset price. For more details see section~4.2.4 in \cite{fouque2011multiscale}

	In the following part we will exploit results from \cite{fouque2003singular} and \cite{fouque2011multiscale} to show that, under the regularity conditions imposed on the derivatives of $h$, the pointwise approximation error $|P^{\delta} -P_{0} -\sqrt{\delta}P_{1}|$ is indeed of order $\mathcal{O}(\delta)$.
	\begin{theorem}
		$\forall(t;x,v) \in [0,T]\times \mathbb{R}^{+} \times \mathbb{R}^{+}$, $\exists  C>0$, such that $$|E^{\delta}(t;x,v)| := |P^{\delta}(t;x,v)-P_{0}(t;x,v)-\sqrt{\delta}P_{1}(t;x,v)|\leq C\delta,$$ where $C$ may depend on $(t;x,v)$ but not on $\delta$.
	\end{theorem}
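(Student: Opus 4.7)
The strategy is a matched-asymptotics corrector argument, adapted to the nonlinear G-HJB setting. Set $E^{\delta}:=P^{\delta}-P_{0}-\sqrt{\delta}\,P_{1}$, so that $E^{\delta}(T,\cdot,\cdot)=0$. First I would linearise (\ref{NPDE}) using Lemma~\ref{lem2}, writing (\ref{EDPqdel}) for $P^{\delta}$ with optimal control $q^{*,\delta}$, (\ref{EDP0}) for $P_{0}$ with $q^{*,0}$, and (\ref{EDP1}) for $P_{1}$ with $q^{*,0}$. Substituting $P^{\delta}=P_{0}+\sqrt{\delta}P_{1}+E^{\delta}$ into (\ref{EDPqdel}) and subtracting the other two equations, the $\mathcal{O}(\sqrt{\delta})$ cross-term $\sqrt{\delta}\,q^{*,0}\rho\sigma e^{v}x\,\partial^{2}_{xv}P_{0}$ is cancelled by the defining equation of $P_{1}$, and off the bad set $A^{\delta}_{t,v}$ of (\ref{A}) one obtains
\begin{equation*}
\begin{split}
-\partial_{t}E^{\delta} = &\;\tfrac{1}{2}(q^{*,\delta})^{2}e^{2v}x^{2}\partial^{2}_{xx}E^{\delta} + \sqrt{\delta}\,q^{*,\delta}\rho\sigma e^{v}x\,\partial^{2}_{xv}E^{\delta}\\
&\; + \delta\bigl[\tfrac{1}{2}\sigma^{2}\partial^{2}_{vv}+(a-be^{\alpha v})\partial_{v}\bigr]E^{\delta} + R^{\delta},
\end{split}
\end{equation*}
with forcing $R^{\delta}_{\mathrm{reg}} = \delta\bigl(\tfrac{1}{2}\sigma^{2}\partial^{2}_{vv}P_{0}+(a-be^{\alpha v})\partial_{v}P_{0}+q^{*,0}\rho\sigma e^{v}x\,\partial^{2}_{xv}P_{1}\bigr)+\mathcal{O}(\delta^{3/2})$ of order $\delta$, bounded pointwise by $\delta$ times a polynomial/exponential weight in $(x,v)$ thanks to the $C^{1,2,2}_{p}$-regularity of $P_{0},P_{1}$ cited in Proposition~\ref{prop4}.

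On the bad set $A^{\delta}_{t,v}$ the feedback switches, with $q^{*,\delta}=\sigma_{\max}$ but $q^{*,0}=\sigma_{\min}$, so the residual acquires an extra contribution whose dominant piece is $\tfrac{1}{2}(\sigma_{\max}^{2}-\sigma_{\min}^{2})e^{2v}x^{2}\partial^{2}_{xx}P_{0}$. Proposition~\ref{prop4} gives $|\partial^{2}_{xx}P^{\delta}-\partial^{2}_{xx}P_{0}|\le C\sqrt{\delta}$ uniformly on compacts; combined with the sign reversal of these second derivatives on $A^{\delta}_{t,v}$, this forces $|\partial^{2}_{xx}P_{0}|\le C\sqrt{\delta}$ there, hence $|R^{\delta}_{\mathrm{bad}}|\le C\sqrt{\delta}\,\Phi(x,v)$ pointwise on $A^{\delta}_{t,v}$ with $\Phi$ a polynomial/exponential weight.

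Since $q^{*,\delta}\in[\sigma_{\min},\sigma_{\max}]$ is bounded away from zero, the linear operator above is uniformly parabolic, and the Feynman--Kac formula along the controlled diffusion $(\tilde X,\tilde V)$ with volatility $q^{*,\delta}$ yields
\[
E^{\delta}(t,x,v) = \mathbb{E}_{(t,x,v)}\Bigl[\int_{t}^{T}R^{\delta}(s,\tilde X_{s},\tilde V_{s})\,ds\Bigr].
\]
The regular part contributes $\mathcal{O}(\delta)$ directly via the uniform moment bounds of Propositions~\ref{inq1} and~\ref{prop2}. The bad-set part is pointwise only $\mathcal{O}(\sqrt{\delta})$; however, Proposition~\ref{prop4} also confines $A^{\delta}_{t,v}$ to an $\mathcal{O}(\sqrt{\delta})$-neighbourhood of the zero set $S^{0}_{t,v}$ of $\partial^{2}_{xx}P_{0}$, and uniform parabolicity of the generator yields a bounded transition density near $S^{0}_{t,v}$, so that the expected occupation time of $A^{\delta}_{t,v}$ is itself $\mathcal{O}(\sqrt{\delta})$. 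The bad-set contribution is then $\mathcal{O}(\sqrt{\delta})\cdot\mathcal{O}(\sqrt{\delta})=\mathcal{O}(\delta)$, and the asserted bound $|E^{\delta}(t,x,v)|\le C\delta$ follows.

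The hard part is precisely this bad-set estimate: the pointwise residual on $A^{\delta}_{t,v}$ is only $\mathcal{O}(\sqrt{\delta})$ because the optimal feedback jumps between $\sigma_{\min}$ and $\sigma_{\max}$ across $S^{0}_{t,v}$, and recovering the advertised $\mathcal{O}(\delta)$ rate requires combining two independent $\sqrt{\delta}$ gains -- the smallness of $\partial^{2}_{xx}P_{0}$ on $A^{\delta}_{t,v}$ (from Proposition~\ref{prop4} and the sign reversal) and the smallness of the expected occupation time of $A^{\delta}_{t,v}$ (a mild non-degeneracy of $\partial^{2}_{xx}P_{0}$ near $S^{0}_{t,v}$, implicit in the $C^{4}$/polynomial-growth hypotheses on $h$).
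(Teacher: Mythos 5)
Your proposal is correct and follows essentially the same route as the paper: a regular perturbation/corrector expansion of the G-HJB equation around $(P_{0},\sqrt{\delta}P_{1})$, cancellation of the $\mathcal{O}(\sqrt{\delta})$ cross-term via the defining equation for $P_{1}$, and a Feynman--Kac representation of the residual along the optimally controlled diffusion, estimated with the uniform moment bounds of Propositions~\ref{inq1} and~\ref{prop2}.

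The one substantive difference is organisational, and it works in your favour. The paper's own proof of this theorem silently sets the $\mathcal{O}(1)$ and $\mathcal{O}(\sqrt{\delta})$ residual terms to zero, which is only legitimate where $q^{*,\delta}=q^{*,0}$; the contribution from the switching set $A^{\delta}_{t,v}$ is split off into the subsequent Feynman--Kac representation (the terms $I_{0}$ and $I_{1}$) and a separate theorem whose proof is deferred wholesale to \cite{fouque2018uncertain}. You instead fold that analysis into the argument itself, and you identify exactly the right mechanism: on $A^{\delta}_{t,v}$ the sign reversal of $\partial^{2}_{xx}P^{\delta}$ against $\partial^{2}_{xx}P_{0}$ together with Proposition~\ref{prop4} forces $|\partial^{2}_{xx}P_{0}|\leq C\sqrt{\delta}$ there, and a second factor of $\sqrt{\delta}$ comes from the expected occupation time of $A^{\delta}_{t,v}$, which requires a non-degenerate crossing of $\partial^{2}_{xx}P_{0}$ through its zero set $S^{0}_{t,v}$ plus a bounded occupation density for the controlled diffusion. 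You are right to flag this occupation-time bound as the genuinely hard step; it is not actually proved in the paper either (it is exactly the content outsourced to \cite{fouque2018uncertain}), so your write-up makes the structure of the argument more honest than the original without introducing any error.
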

	
	\begin{proof}
		Adopting the arguments of Secs.~1.9.3 and 4.1.2 in \cite{fouque2011multiscale}, we define the following linear parabolic differential operator
		\begin{equation}\label{Ldelq}
			\begin{aligned}
				\mathcal{L}^{\delta}(q) := & \partial_{t}+\frac{1}{2}
				q^{2}e^{2v}x^{2}\partial^{2}_{xx} +\sqrt{\delta}q\rho e^{v}x\partial^{2}_{xv} + \delta(\frac{1}{2}\sigma^{2}\partial^{2}_{vv} +(a-be^{\alpha v})\partial_{v})\\
				= & \mathcal{L}_{0}(q) + \sqrt{\delta}\mathcal{L}_{1}(q) + \delta\mathcal{L}_{2},
			\end{aligned}
		\end{equation}
		where $\mathcal{L}_{0}(q)$ contains the time derivative and the Black-Scholes operator, $\mathcal{L}_{1}(q)$ contains the mixed derivative due to the covariation between $X_{t}$ and $V_{t}$, and $\delta\mathcal{L}_{2}$ is the infinitesimal generator of the volatility process $V_{t}$.\\
		
		\noindent 
		We can recast equation (\ref{EDPqdel}) as
		\begin{equation}\label{LdelqdelPdel}
			\begin{array}{cc}
				\mathcal{L}^{\delta}(q^{*,\delta})P^{\delta}=0,\\
				P^{\delta}(t;x,v)=h(x).
			\end{array}
		\end{equation}
		Equivalently, equation (\ref{EDP0}) reads  
		\begin{equation}\label{L0q0p0}
			\begin{array}{cc}
				\mathcal{L}_{0}(q^{*,0})P_{0}=0,\\
				P_{0}(T;x,v)=h(x)\,.
			\end{array}
		\end{equation}
		and (\ref{EDP1}) can be expressed by 
		\begin{equation}\label{L0q0P1}
			\begin{array}{cc}
				\mathcal{L}_{0}(q^{*,0})P_{1}+ \mathcal{L}_{1}(q^{*,0})P_{0},\\
				P_{1}(T,x,v)=h(x).
			\end{array}
		\end{equation}
		Now, applying the operator $\mathcal{L}^{\delta}(q^{*,\delta})$ to the error term $E^{\delta} = P^{\delta}-P_{0}-\sqrt{\delta}P_{1}$, we obtain 
		\begin{eqnarray*}
			\mathcal{L}^{\delta}(q^{*,\delta})E^{\delta} & = & \mathcal{L}^{\delta}(q^{*,\delta})(P^{\delta}-P_{0}-\sqrt{\delta}P_{1}) \\
			& = & -(\mathcal{L}_{0}(q^{*,\delta}) + \sqrt{\delta}\mathcal{L}_{1}(q^{*,\delta}) + \delta\mathcal{L}_{2}q^{*,\delta}))(P_{0}+\sqrt{\delta}P_{1})\\
			& = & -\underbrace{\sqrt{\delta}\mathcal{L}_0(q^{*,\delta}) P_1 +  \sqrt{\delta}\mathcal{L}_1(q^{*,\delta}) P_0}_{=0} - \delta\mathcal{L}_2(q^{*,\delta}) P_0 + \delta\mathcal{L}_1(q^{*,\delta}) P_1 +  \delta^{3/2}\mathcal{L}_2(q^{*,\delta}) P_1   
		\end{eqnarray*}
		Using the terminal condition
		$$E^{\delta}(T;x,v) = P^{\delta}(T;x,v)-P_{0}(T;x,v)-\sqrt{\delta}P_{1}(T;x,v) =0\,$$
		and the continuity of the solution to the parabolic equation (\ref{EDP1}), we conclude that $|E^{\delta}(t;x,v) |=\mathcal{O}(\delta)$. 
	\end{proof}
	
	\noindent
	\textbf{Feynman-Kac representation of the error term:} Now recall that the asset price in the worst-case scenario is governed by (\ref{deldX}) with $r=0$ and $q=q^{*,\delta}$: 
	\begin{eqnarray}\label{2}
		dX_{t}^{*,\delta}=q^{*,\delta}_t e^{V_{t}}X_{t}^{*,\delta}dW_{t}^{1},
	\end{eqnarray}
	where, by Lemma\thinspace\ref{lem2}, the optimal control $(q_{t})=(q^{*,\delta})$ is explicitly given  for sufficiently small $\delta$. (It is straighforward to establish the existence and the uniqueness of the solution of (\ref{2}) $X_{t}^{*,\delta}$.)\\
	
	\noindent 
	We can apply the Feynman-Kac formula to get probabilistic representation of $E^{\delta}(t,x,v)$, namely,
	$$E^{\delta}(t,x,v) = I_{0} + \delta^{\frac{1}{2}} I_{1} + \delta I_{2} + \delta^{\frac{3}{2}} I_{3},$$
	where
	$$
	I_{0}= \mathbb{E}_{(t,x,v)} \left[\int^{T}_{t}
	\frac{1}{2}\left((q^{*,\delta})^{2} -(q^{*,0})^{2}\right)e^{2V_{s}}(X^{*,\delta}_{s})^{2}\partial^{2}_{xx}P_{0}(s,X^{*,\delta}_{s} ,V_{s})ds\right],\qquad
	$$
	
	$$
	I_{1} = \mathbb{E}_{(t,x,v)}\left[\int^{T}_{t} (q^{*,\delta}-q^{*,0})\rho\sigma e^{V_{s}}X^{*,\delta}_{s} \partial^{2}_{xv}P_{0}(s,X^{*,\delta}_{s} ,V_{s})\right.\qquad\qquad\qquad\qquad$$
	$$
	\left.+ \frac{1}{2}\left((q^{*,\delta})^{2}-(q^{*,0})^2\right)e^{2V_{s}}(X^{*,\delta}_{s})^{2}  \partial^{2}_{xx}P_{1}(s,X^{*,\delta}_{s},V_{s})ds\right],
	$$
	
	$$
	I_{2}= \mathbb{E}_{(t,x,v)}\left[\int^{T}_{t}q^{*,\delta}\rho\sigma e^{V_{s}}X^{*,\delta}_{s}\partial^{2}_{xv}P_{1}(s,X^{*,\delta}_{s} ,V_{s}) +
	\frac{1}{2} \sigma^{2}\partial^{2}_{vv}P_{0}(s,X^{*,\delta}_{s} ,V_{s}) \right.$$
	$$
	\left.
	+(a-be^{\alpha V_{s}})\partial_{v}P_{0}(s,X^{*,\delta}_{s},V_{s})ds\right],
	$$
	
	$$
	I_{3}= \mathbb{E}_{(t,x,v)}\left[\int^{T}_{t}\frac{1}{2}
	\sigma^{2}\partial^{2}_{vv}P_{1}(s,X^{*,\delta}_{s},V_{s}) +(a-be^{\alpha V_{s}})\partial_{v}P_{1}(s,X^{*,\delta}_{s} ,V_{s})ds\right].
	$$
	Noting that 
	\begin{eqnarray*}
		\{q^{*,\delta}{\not =} q^{*,0}\} &=& \mathcal{A}^{\delta}_{t,v},  \\
		q^{*,\delta}-q^{*,0}&=& (\sigma_{\max}-\sigma_{\min})(\mathbf{1}_{\{\partial^{2}_{xx}P^{\delta}\geq 0\}}-\mathbf{1}_{\{\partial^{2}_{xx}P_{0}\geq 0\}}),\\
		\mbox{and}\qquad\qquad
		(q^{*,\delta})^{2}-(q^{*,0})^{2}&=&(\sigma_{\max}^{2}-\sigma_{\min}^{2})(\mathbf{1}_{\{\partial^{2}_{xx}P^{\delta}\geq 0\}}-\mathbf{1}_{\{\partial^{2}_{xx}P_{0}\geq 0\}})\,
	\end{eqnarray*}
	the next theorem shows that $I_{0}$ , $I_{1}$ are indeed of order $\mathcal{O}(\delta)$ and  $\mathcal{O}(\sqrt{\delta})$.
	\begin{theorem}
		There exist constants $M_{0}, M_{1}>0$ depending on $(t,x,v)$, but not on $\delta$, such that 		$$|I_{0}|\leq M_{0}\delta\,,\quad\text{and}\quad |I_{1}|\leq M_{1}\sqrt{\delta}\,.$$
	\end{theorem}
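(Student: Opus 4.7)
The plan is to extract two factors of $\sqrt\delta$ from the geometry of the sign-change set $\mathcal{D}^\delta_{t,v}:=A^\delta_{t,v}\cup B^\delta_{t,v}$, where $B^\delta_{t,v}:=\{x\in\R^+:\partial^2_{xx}P^\delta<0,\ \partial^2_{xx}P_0\ge 0\}$ is defined analogously to (\ref{A}). By Lemma~\ref{lem2} and the identities recorded just before this theorem, both $q^{*,\delta}-q^{*,0}$ and $(q^{*,\delta})^2-(q^{*,0})^2$ vanish outside $\mathcal{D}^\delta_{t,v}$ and are uniformly bounded on it by constants depending only on $\sigma_{\max}-\sigma_{\min}$.

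The first $\sqrt\delta$ comes from the smallness of $\partial^2_{xx}P_0$ on $\mathcal{D}^\delta_{t,v}$: there $\partial^2_{xx}P^\delta$ and $\partial^2_{xx}P_0$ have opposite signs, so
\[
|\partial^2_{xx}P_0(t;x,v)| \le |\partial^2_{xx}P^\delta(t;x,v)-\partial^2_{xx}P_0(t;x,v)| \le C_*\sqrt\delta
\]
uniformly on compacts, by Proposition~\ref{prop4}. The second $\sqrt\delta$ comes from the Lebesgue measure of the slice $\mathcal{D}^\delta_{t,v}\subset\R^+$. Under the transversality assumption $\partial^3_{xxx}P_0\ne 0$ on the zero set $S^0_{t,v}$ of $\partial^2_{xx}P_0(t;\cdot,v)$, which is inherited for $P_0$ through the parabolic regularity~\cite[Thm.~5.2.5]{giga2010nonlinear} applied to (\ref{EDP0}) for generic $h\in C^4$, the inclusion $\mathcal{D}^\delta_{t,v}\subset\{x:|\partial^2_{xx}P_0|\le C_*\sqrt\delta\}$ together with a first-order Taylor expansion about $S^0_{t,v}$ gives $|\mathcal{D}^\delta_{t,v}|\le \widetilde C\sqrt\delta$, uniformly in $(t,v)$ on compacts.

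Putting these together, I would represent each expectation as an integral against the joint density $p^\delta(s,\cdot,\cdot)$ of $(X^{*,\delta}_s,V_s)$, which is locally bounded uniformly in $\delta$ because (\ref{2}) is uniformly elliptic in $x$ with smooth coefficients, so standard parabolic Harnack or Malliavin arguments apply. For $I_0$ the $x$-integrand is pointwise bounded by $\mathbf{1}_{\mathcal{D}^\delta}(x)\,|\partial^2_{xx}P_0(s,x,v)|\le C_*\sqrt\delta\,\mathbf{1}_{\mathcal{D}^\delta}(x)$, so the $x$-integral yields $\sqrt\delta\cdot|\mathcal{D}^\delta_{s,v}|\le\widetilde C C_*\delta$ times the locally bounded weight $e^{2v}x^2 p^\delta$; absorbing the remaining polynomial growth via Propositions~\ref{inq1}--\ref{prop2} produces $|I_0|\le M_0\delta$. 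For $I_1$ the $x$-integrand involves only $\mathbf{1}_{\mathcal{D}^\delta}(x)$ multiplied by locally bounded factors (with no additional smallness of $\partial^2_{xx}P_0$), so only the measure bound contributes and we obtain $|I_1|\le M_1\sqrt\delta$.

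The main obstacle is the Lebesgue-measure step: the transversality $\partial^3_{xxx}P_0\ne 0$ on $S^0_{t,v}$ is implicit but not spelled out in the paper's standing hypotheses, and a uniform-in-$\delta$ local density bound for $X^{*,\delta}_s$ is needed to transfer from Lebesgue measure to the law of $(X^{*,\delta}_s,V_s)$. Each of these is standard for uniformly elliptic SDEs but should be made explicit; once they are in place, the remaining estimates are a routine application of Cauchy--Schwarz together with the moment bounds already established.
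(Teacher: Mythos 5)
The paper offers no argument of its own here: its entire proof is the sentence ``The proof follows the same method as in \cite{fouque2018uncertain}.'' Your proposal is, in substance, a reconstruction of exactly that method --- extract one factor of $\sqrt\delta$ from $|\partial^2_{xx}P_0|\le|\partial^2_{xx}P^\delta-\partial^2_{xx}P_0|\le C\sqrt\delta$ on the set where the two optimal controls disagree (Proposition~\ref{prop4}), and a second factor of $\sqrt\delta$ from the Lebesgue measure of that set via a transversality condition on the zero set $S^0_{t,v}$ together with a density bound for $X^{*,\delta}_s$; the term $I_1$ only picks up the measure factor, hence $\sqrt\delta$. So you are on the same route as the (cited) proof, and you are right that the transversality hypothesis and the uniform-in-$\delta$ density bound are genuine hypotheses that this paper never states. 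One correction to your justification of the density bound: the SDE (\ref{2}) does \emph{not} have smooth coefficients --- the diffusion coefficient $q^{*,\delta}_t e^{V_t}x$ is bang--bang in $q^{*,\delta}$, hence discontinuous across the zero set of $\partial^2_{xx}P^\delta$, and it degenerates at $x=0$ --- so standard Malliavin or smooth-coefficient Harnack arguments do not apply directly. What saves the step is that the coefficient is bounded above and below away from zero on compacts of $\R^+$, so Gaussian upper bounds for uniformly parabolic equations with merely measurable coefficients (Aronson/Fabes--Stroock, or Krylov-type estimates) still give the locally bounded transition density you need. With that substitution your outline matches the cited argument and closes the statement.
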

	
	\begin{proof}
		The proof follows the same method as in \cite{fouque2018uncertain}.
	\end{proof}

	\section{Second-order BSDE representation of the worst-case scenario}\label{sec:2BSDE}
	
	We recall the definition of 2BSDE, and we will explain how it is linked to our G-HJB equation; for details, we refer to  \cite{cheridito2007second}.
	\begin{definition}
		Let $(t,x) \in [0,T) \times \mathbb{R}^d$, $(X_s^{t,x})_{s \in [t,T]}$ a diffusion process and $(Y_s,Z_s, \Gamma_s, A_s)_{s \in [t,T]}$ a quadruple of $\mathbb{F}^{t,T}$-progressively measurable processes
		taking values in $\mathbb{R}$, $\mathbb{R}^d$, $\mathcal{S}^d$ and $\mathbb{R}^d$, respectively. The quadruple $(Y,Z,\Gamma,A)$ is called a solution to the second order backward stochastic differential equation {\rm (2BSDE)} corresponding to $(X^{t,x},f,g)$ if
		\begin{eqnarray}
			\label{2bsde1}
			dY_s &=& f(s, X^{t,x}_s, Y_s ,Z_s , \Gamma_s) \,ds
			+ Z_s' \circ dX^{t,x}_s \, , \quad s \in [t,T) \, ,\\
			\label{2bsde2}
			dZ_s &=& A_s \,ds + \Gamma_s \,dX^{t,x}_s \, , \quad s \in [t,T) \, ,\\
			\label{2bsde3}
			Y_T &=& g\left(X^{t,x}_T\right) \, ,
		\end{eqnarray}
		where $Z_s' \circ dX^{t,x}_s$ denotes Fisk--Stratonovich
		integration, which is related to It\^{o} integration by
		$$
		Z_s' \circ dX^{t,x}_s =
		Z_s' \,dX^{t,x}_s + \frac{1}{2} \,d\left<{Z, X^{t,x}}_s\right>
		=
		Z_s' \,dX^{t,x}_s + \frac{1}{2} \,{\rm Tr}
		[\Gamma_s \sigma(X^{t,x}_s) \sigma(X^{t,x}_s)' ] \,ds \, .
		$$
	\end{definition}
	The last definition furnishes a fundamental relation between 2BSDE like (\ref{2bsde1})-(\ref{2bsde3}) and fully nonlinear parabolic PDEs. To understand this relation, let $f : [0,T) \times \mathbb{R}^d \times
	\mathbb{R} \times \mathbb{R}^d \times \mathcal{S}^d \to \mathbb{R}$
	and $g : \mathbb{R}^d \to \mathbb{R}$ be continuous functions. Further assume that $u: [0,T] \times \mathbb{R}^d \to \mathbb{R}$ is a continuous function with the properties  
	$$u_t, Du, D^2u, \mathcal{L} D u \in \mathcal{C}^0([0,T) \times \mathbb{R}^d)\,,$$
	that solves the PDE
	\begin{eqnarray}\label{pde}
		- u_t(t,x) + f\left(t,x,u(t,x),Du(t,x),D^2u(t,x)\right)
		&=& 0 \quad \mbox{on } [0,T) \times \mathbb{R}^d \, ,
	\end{eqnarray}
	with terminal condition
	\begin{equation} \label{terminal}
		u(T,x) =g(x) \, , \quad x \in \mathbb{R}^d \, .
	\end{equation}
	Then, it follows directly from It\^{o}'s formula that
	for each pair $(t,x) \in [0,T) \times \mathbb{R}^d$, the
	processes
	\begin{eqnarray*}
		Y_s &=& u\left(s,X^{t,x}_s\right) \, , \quad s \in [t,T] \, ,\\
		Z_s &=& Du\left(s, X^{t,x}_s\right) \, , \quad s \in [t,T] \, ,\\
		\Gamma_s &=& D^2u\left(s, X^{t,x}_s\right) \, , \quad s \in [t,T] \, ,\\
		A_s &=& \mathcal{L} Du \left(s, X^{t,x}_s\right) \, , \quad s \in [t,T] \, ,
	\end{eqnarray*}
	solve the 2BSDE corresponding to $(X^{t,x}, f,g)$. Conversely, the first component of the solution of the 2BSDE \eqref{2bsde1} at the initial time is a solution of the fully nonlinear PDE \eqref{pde} satisfies $Y_t = u(t,x)$. Note that the representation of  \eqref{pde} by a 2BSDE is not unique, even though its solution is (cf.~\cite{hartmann2019chaos}). 
	
	The  representation of fully nonlinear parabolic PDEs, such as (\ref{EDPqdel}), allows to solve them numerically by solving the corresponding 2BSDE, e.g. by using the techniques described in \cite{beck2019machine}.

	\subsection{2BSDE representation of the payoff}
	Here we specifically use the link between our G-HJB equation and 2BSDEs to improve the convergence rate of the convergence $P^\delta\to P^0$. To this end we write the 2BSDE for $P_\delta$ (resp. $P_0$) as follows: for all $s \in [t,T)$ it holds that 
	\begin{eqnarray}
		\label{2bsde1EXAMPLE}
		dY^{\delta;t,x}_s &=& f^{\delta}(s, \tilde{X}^{\delta;t,x}_s, Y^{\delta;t,x}_s ,Z^{\delta;t,x}_s , \Gamma^{\delta;t,x}_s) \,ds
		+ (Z^{\delta;t,x})_s' \circ d\tilde{X}^{\delta;t,x}_s,\\
		\label{2bsde2EXAMPLE}
		dZ^{\delta;t,x}_s &=& A^{\delta}_s \,ds + \Gamma^{\delta}_s \,d\tilde{X}^{\delta;t,x}_s,\\
		\label{2bsde3EXAMPLE}
		Y^{\delta;t,x}_T &=& h\left(\tilde{X}^{\delta;t,x}_T\right) ,
	\end{eqnarray}
	where $\tilde{X}$ is the solution to the SDE 
	$$d(X_{t}^{\delta},V_{t})=d\tilde{X}_t=d\tilde{W}_t,\quad d\tilde{W}_t=d(W^1_{t},W^2_{t}), \quad \tilde{X}_0=\tilde{x}$$
	Similarly, 
	\begin{eqnarray}
		\label{2bsde0EXAMPLE}
		dY^{0;t,x}_s &=& f^{0}(s, {X}^{0;t,x}_s, Y^{0;t,x}_s ,Z^{0;t,x}_s , \Gamma^{0}_s) \,ds
		+ (Z^{0;t,x})_s' \circ d{X}^{0;t,x}_s,\\
		\label{2bsde02EXAMPLE}
		dZ^{0;t,x}_s &=& A^{0}_s \,ds + \Gamma^{0}_s \,d{X}^{0;t,x}_s ,\\
		\label{2bsde03EXAMPLE}
		Y^{0;t,x}_T &=& h\left({X}^{0;t,x}_T\right),
	\end{eqnarray}
	where $X_{t}^{0}$ is the solution to 
	$$dX^{0}_{t} = dW^{1}_{t}, \quad X_{0} = x$$. Here $h$ denotes the payoff function (specified below), and 
	$$f^{0}(s,{x},y,z,S)=-\frac{1}{2}{x}^{0}e^{2v}|\bar{\sigma}(S_{1,1})|^2 S_{1,1}$$
	$$f^{\delta}(s,\tilde{x},y,z,S)=-\frac{1}{2}\tilde{x}^{\delta}e^{2v}|\bar{\sigma}(S_{1,1})|^2 S_{1,1}-2\sqrt{\delta}\tilde{x}^{\delta}e^{v}\sigma \rho |\bar{\sigma}(S_{1,2})|S_{1,2}
	-\delta \left( \frac{1}{2}\sigma^2 S_{2,2}+(a-be^{\alpha v})z_2 \right),$$
	where,
	\begin{equation}
		\bar{\sigma}=	
		\begin{cases} 
			\sigma_{\max}\, & x\geq 0\\
			\sigma_{\min}\, & x< 0
		\end{cases}.
	\end{equation}
	Note that the nonlinear diffusion coefficient has been moved to the drift terms (or: drivers) $f^0$ and $f^\delta$, which is why the SDE dynamics is trivial.  
	Then from the link between G-PDEs and 2BSDEs we have $Y^{0;t,x}_t=P_0(t,x)$ and $Y^{\delta;t,x}_t=P_\delta(t,x)$. \\
	
	\noindent 
	We will now  use this link to revisit the convergence result for $P_\delta \to P_0$.
	\begin{theorem} \label{conv}
		$P_{\delta}$ converges to $P_{0}$ as $\delta\to 0$, uniformly on compact sets and at rate ${\delta}$. 
	\end{theorem}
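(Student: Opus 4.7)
The plan is to apply a 2BSDE stability estimate to the pair $(Y^{\delta;t,x},Y^{0;t,x})$, taking advantage of the trivial dynamics shared by both representations. Since $d\tilde X_s=d\tilde W_s$ and $dX^{0}_s=dW^{1}_s$, the two $Y$-processes live on a common filtration and share the same terminal value $h(X^{0;t,x}_T)$ (as $h$ depends only on the price coordinate, which is the first component of $\tilde X$). Hence $\Delta Y:=Y^{\delta;t,x}-Y^{0;t,x}$ solves a BSDE on $[t,T]$ with zero terminal condition whose driver is the difference $f^{\delta}-f^{0}$.

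Next, I would decompose $f^{\delta}-f^{0}=R_{\mathrm{sup}}+\sqrt{\delta}\,g_{1}+\delta\,g_{2}$, where $R_{\mathrm{sup}}$ accounts for the disagreement between the optimal controls $q^{*,\delta}$ and $q^{*,0}$ (supported on the set $A^{\delta}_{t,v}$ of \eqref{A}), $g_{1}$ is the mixed-derivative contribution proportional to $\Gamma_{1,2}=\partial^{2}_{xv}P$, and $g_{2}$ contains the generator of the $V$-process applied to $P$. Applying It\^o's formula to $|\Delta Y_{s}|^{2}$, using the Lipschitz character of $f^{\delta}$ in $(y,z,\Gamma)$, the uniform bounds on $\Gamma^{\delta}=D^{2}P^{\delta}$ from Proposition~\ref{prop4}, and the moment estimates of Proposition~\ref{prop2} on $\tilde X^{\delta}$, a standard Gronwall step yields
\begin{equation*}
\mathbb{E}\bigl[|\Delta Y_{t}|^{2}\bigr]\le C\int_{t}^{T}\mathbb{E}\bigl[|R_{\mathrm{sup}}|^{2}+\delta|g_{1}|^{2}+\delta^{2}|g_{2}|^{2}\bigr]\,ds.
\end{equation*}

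The main obstacle is that this naive bound only produces rate $\sqrt{\delta}$, because of the $\delta|g_{1}|^{2}$ term. To reach the claimed rate $\delta$, I would absorb the $\sqrt{\delta}$ contribution into the first-order corrector $P_{1}$ of Section~\ref{sec:worstcase}: introduce $\tilde Y_{1}$, the 2BSDE avatar of $P_{1}$ whose source is exactly $q^{*,0}\rho\sigma e^{v}x\partial^{2}_{xv}P_{0}$ (cf.~\eqref{EDP1}), and compare $Y^{\delta}$ against $Y^{0}+\sqrt{\delta}\tilde Y_{1}$ instead of $Y^{0}$. By construction the $\sqrt{\delta}\,g_{1}$ piece cancels against the source of $\tilde Y_{1}$, so the residual driver is $\mathcal{O}(\delta)$ in $(y,z,\Gamma)$-Lipschitz norm. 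Combined with the corrector theorem proved earlier, which gives $|P^{\delta}-P_{0}-\sqrt{\delta}P_{1}|=\mathcal{O}(\delta)$, the 2BSDE stability then yields $|Y^{\delta}-Y^{0}|\le C\delta$ pointwise, uniformly on compact subsets of $[0,T]\times\mathbb{R}\times\mathbb{R}^{+}$.

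The hard part will be quantifying the control-mismatch remainder $R_{\mathrm{sup}}$. For this I would combine Proposition~\ref{prop4} (uniform convergence of $\partial^{2}_{xx}P^{\delta}$ at rate $\sqrt{\delta}$) with the non-degeneracy of $\partial^{2}_{xx}P_{0}$ across the zero level set $S^{0}_{t,v}$, to conclude that the Lebesgue measure of $A^{\delta}_{t,v}$ is $\mathcal{O}(\sqrt{\delta})$. Since $|R_{\mathrm{sup}}|$ is bounded by a constant on $A^{\delta}_{t,v}$ and zero off it, this gives $\int|R_{\mathrm{sup}}|^{2}\,ds=\mathcal{O}(\delta)$, which matches the target rate and closes the argument. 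Uniformity on compact subsets then follows from the uniform regularity of $P^{\delta}$ and its derivatives on such sets.
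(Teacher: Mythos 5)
Your first half---writing both value functions as solutions of 2BSDEs with trivial forward dynamics, forming $\Delta Y=Y^{\delta}-Y^{0}$, applying It\^o's formula to $|\Delta Y|^{2}$ and running a Gronwall-type argument against the driver difference $f^{\delta}-f^{0}$---is exactly the route the paper takes; your observation that the terminal values coincide (both forward processes share the first component $x+W^{1}$, and $h$ depends only on that coordinate) is correct and in fact sharper than the paper, which merely bounds $\mathbb{E}[h(\tilde X^{\delta}_{T})-h(X^{0}_{T})]$ by $C\delta$. Where you diverge is instructive: you correctly diagnose that this stability estimate, fed with $\mathbb{E}[(f^{\delta}-f^{0})^{2}]=\mathcal{O}(\delta)$, can only deliver $\mathbb{E}[\sup_{s}|\Delta Y_{s}|^{2}]\le C\delta$, i.e.\ rate $\sqrt{\delta}$ for $|\Delta Y|$ itself. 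The paper's proof stops precisely at $\mathbb{E}[\sup_{s}e^{\alpha s}|y_{s}|^{2}]\le\tilde C_{\varepsilon}\delta$ and reads off rate $\delta$, so the ``main obstacle'' you identify is exactly the step that the paper's own argument does not resolve.

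However, your proposed repair does not close that gap. Comparing $Y^{\delta}$ against $Y^{0}+\sqrt{\delta}\,\tilde Y_{1}$ and invoking the corrector theorem gives $|P^{\delta}-P_{0}-\sqrt{\delta}P_{1}|=\mathcal{O}(\delta)$, but the triangle inequality then only yields $|P^{\delta}-P_{0}|\le\mathcal{O}(\delta)+\sqrt{\delta}\,|P_{1}|$, which is of exact order $\sqrt{\delta}$ wherever $P_{1}\neq 0$; since the source term $q^{*,0}\rho\sigma e^{v}x\,\partial^{2}_{xv}P_{0}$ in \eqref{EDP1} does not vanish for $\rho\neq 0$, you cannot ``combine'' the two statements to conclude $|Y^{\delta}-Y^{0}|\le C\delta$. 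To salvage rate $\delta$ you would need either $P_{1}\equiv 0$ (e.g.\ $\rho=0$) or a separate argument that $P_{1}$ vanishes at the points of interest. Two further caveats: your claim that the Lebesgue measure of $A^{\delta}_{t,v}$ is $\mathcal{O}(\sqrt{\delta})$ requires a quantitative non-degeneracy of $\partial^{2}_{xx}P_{0}$ across the level set $S^{0}_{t,v}$ that is nowhere assumed in the paper, and passing from an $L^{2}(\Omega)$ estimate on $\Delta Y$ to a pointwise deterministic bound on $P^{\delta}-P_{0}$ is legitimate only because $Y_{t}$ is deterministic at the initial time---that step deserves to be stated explicitly.
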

	\begin{proof}
		We have
		\begin{eqnarray*}
			Y^{\delta;t,x}_{t} &=& h(\tilde{X}_{T}^{\delta;t,x})+\int_{t}^{T}f^{\delta}(r, \tilde{X}^{\delta;s,x}_{r},Y^{\delta;s,x}_{r}, Z^{\delta;s,x}_{r},\Gamma^{\delta;s,x}_{r})dr -\int_{t}^{T} (Z^{\delta;s,x})'_{r} \circ d\tilde{X}_{r}^{\delta;s,x},\\
			(Z^{\delta;s,x})_{r}'\circ d\tilde{X}^{\delta;s,x}_{r}&=&(Z^{\delta;s,x})_{r}'d\tilde{X}^{\delta;s,x}_{r}+\frac{1}{2}{\rm Tr}[\Gamma^{\delta}_{r}\sigma(\tilde{X}^{\delta;s,x}_{r})\sigma(\tilde{X}^{\delta;s,x}_{r})']dr,\\
			(Z^{\delta;s,x})_{r}'d\tilde{X}^{\delta;s,x}_{r}&=&(Z_{1}^{\delta;t,x})_{r}'dW^{1}_{r}+(Z_{2}^{\delta;s,x})_{r}'dW^{2}_{r},
		\end{eqnarray*}
		and thus
		$$Y^{\delta;t,x}_{t} = h(\tilde{X}_{T}^{\delta;t,x})+\int_{t}^{T}f^{\delta}(r, \tilde{X}^{\delta;s,x}_{r},Y^{\delta;s,x}_{r}, Z^{\delta;s,x}_{r},\Gamma^{\delta;s,x}_{r})dr-\int_{t} ^{T}((Z_{1}^{\delta;s,x})_{r}'dW^{1}_{r}+(Z_{2}^{\delta;s,x})_{r}'dW^{2}_{r})$$
		$$-\int_{t}^{T}\frac{1}{2}{\rm Tr}[\Gamma^{\delta}_{r}\sigma(\tilde{X}^{\delta;s,x}_{r})\sigma(\tilde{X}^{\delta;s,x}_{r})']dr.$$
		\begin{eqnarray*}
			Y^{0;t,x}_{t}& =& h(X^{0;t,x})+\int_{t}^{T} f^{0}(r, X^{0;s,x}_{r}, Y^{0;s,x}_{r},Z^{0;s,x}_{r}, \Gamma^{0;s,x}_{r})dr-\int_{t} ^{T} (Z^{0;s,x})_{r}' \circ dX^{0;s,x}_{r},\\
			(Z^{0;s,x})_{r}' \circ d\tilde{X}^{0;s,x}_{r} &=& (Z^{0;s,x})_{r}'d\tilde{X}^{0;s,x}_{r}+\frac{1}{2}{\rm Tr}[\Gamma^{0}_{r}\sigma(\tilde{X}^{0;s,x}_{r})\sigma(\tilde{X}^{0;s,x}_{r})']dr\,.
		\end{eqnarray*}
		Calling $\tilde{Z}_{r}^{s,x}=(Z_{r}^{0;s,x},0)$
		$$(Z^{0;s,x})_{r}'d\tilde{X}^{0;s,x}_{r} = (\tilde{Z}_{s})'d\tilde{X}^{\delta;s,x}_{r} = (Z^{0;s,x})_{r}'dW^{1}_{r} + 0\,,$$
		we obtain 
		$$Y^{0;t,x}_{t} = h(X_{T}^{0;t,x})+\int_{t}^{T}{f}^{0}(r, X^{0;s,x}_{r}, Y^{0;s,x}_{r}, Z^{0;s,x}_{r}, \Gamma^{0;s,x}_{r})dr-\int_{t}^{T}(Z^{0;s,x})_{r}'dW^{1}_{r}$$
		$$-\int_{t}^{T}\frac{1}{2}{\rm Tr}[\Gamma^{0}_{r}\sigma(\tilde{X}^{0;s,x}_{r})\sigma(\tilde{X}^{0;s,x}_{r})']dr.$$
		Now let $y_{t} = Y^{\delta;t,x}_{t}- Y^{0;t,x}_{t}$. Then 
		\begin{align*}
			y_{t} = & h(\tilde{X}^{\delta;t,x}_{T})-h(X^{0;t,x}_{T})
			+\int_{t}^{T}{f}^{\delta}(r,\tilde{X}^{\delta;s,x}_{r},Y ^{\delta;s,x}_{r}, Z^{\delta;s,x}_{r}, \Gamma^{\delta;s,x}_{r})-{f}^{0}(r, X^{0;s,x}_{r}, Y^{0;s,x}_{r}, Z^{0;s,x}_{r}, \Gamma^{0;s,x}_{r})dr\\
			&-\int_{t}^{T} (((Z_{1}^{\delta;s,x})_{r}'dW^{1}_{r}+(Z_{2}^{\delta;s,x})_{r}'dW^{2}_{r})-(Z^{0;s,x})_{r}'dW^{1}_{r}) - {\rm Tr}[\Gamma^{0}_{r}\sigma(\tilde{X}^{0;s,x}_{r})\sigma(\tilde{X}^{0;s,x}_{r})']))\\ 
			= & h(\tilde{X}^{\delta;t,x}_{T})-h(X^{0;t,x}_{T}) + \int_{t}^{T}{f}^{\delta}(r, \tilde{X}^{\delta;s,x}_{r}, Y^ {\delta;s,x}_{r}, Z^{\delta;s,x}_{r}, \Gamma^{\delta;s,x}_{r})-{f}^{0}(r, X^{0;s,x}_{r}, Y^{0;s,x}_{r}, Z^{0;s,x}_{r}, \Gamma^{0;s,x}_{r})dr\\
			&-\int_{t}^{T} (((Z_{1}^{\delta;s,x})_{r}'-(Z^{0;s,x})_{r}')dW^{1}_{r} + (Z_{2}^{\delta;s,x})_{r}'dW^{2}_{r})\\ 
			&- \int_{t}^{T}\frac{1}{2}({\rm Tr}[\Gamma^{\delta}_{r}\sigma(\tilde{X}^{\delta;s,x}_{r})\sigma(\tilde{X}^{\delta;s,x}_{r})']-{\rm Tr}[\Gamma^{0}_{r}\sigma(\tilde{X}^{0;s,x}_{r})\sigma(\tilde{X}^{0;s,x}_{r})'])dr,
		\end{align*}
		where 
		\begin{align*} 
			& {f}^{\delta}(r, \tilde{X}^{\delta;s,x}_{r}, Y^ {\delta;s,x}_{r}, Z^{\delta;s,x}_{r}, \Gamma^{\delta;s,x}_{r})-{f}^{0}(r, X^{0;s,x}_{r}, Y^{0;s,x}_{r}, Z^{0;s,x}_{r}, \Gamma^{0;s,x}_{r}) \\ 
			& =-\frac{1}{2}(\tilde{x}^{\delta}e^{2V_{t}}-x^{0}e^{2v})|\sigma(\Gamma_{11})|^{2}\Gamma_{11} -2\sqrt{\delta}\sigma\rho\tilde{x}^{\delta}e^{V_{t}}|\sigma(\Gamma_{12})|\Gamma_{12}-\delta(\frac{1}{2}\sigma^{2}\Gamma_{22}+(a+be^{\alpha V_{t}})z^{\delta}_{2})\,.
		\end{align*}
		Applying It\^{o}'s formula to $e^{\alpha t}|y_{t}|^{2} $ for some $\alpha>0$ then yields 
		\begin{align*}
			d(e^{\alpha t}|y_{t}|^{2}) = & \alpha e^{\alpha s}|y_{s}|^{2}ds\\
			&-2e^{\alpha s}|y_{s}|\{{f}^{\delta}(s, \tilde{X}^{\delta;t,x}_{s}, Y^ {\delta;t,x}_{s}, Z^{\delta;t,x}_{s}, \Gamma^{\delta;t,x}_{s})-{f}^{0}(s, X^{0;t,x}_{s}, Y^{0;t,x}_{s}, Z^{0;t,x}_{s}, \Gamma^{0;t,x}_{s})\}ds\\
			&+2e^{\alpha s}|y_{s}|\{((Z_{1}^{\delta;t,x})_{s}'-(Z^{0;t,x})_{s}')dW^{1}_{s} + (Z_{2}^{\delta;t,x})_{s}'dW^{2}_{s} \}\\\
			&+e^{\alpha s} ({\rm Tr}[\Gamma^{\delta}_{s}\sigma(\tilde{X}^{\delta;t,x}_{s})\sigma(\tilde{X}^{\delta;t,x}_{s})']-{\rm Tr}[\Gamma^{0}_{s}\sigma(\tilde{X}^{0;t,x}_{s})\sigma(\tilde{X}^{0;t,x}_{s})'])ds\\
			&+ e^{\alpha s}\{|| ((Z_{1}^{\delta;t,x})_{s}'-(Z^{0;t,x})_{t}') ||^{2} +|| (Z_{2}^{\delta;t,x})_{s}'||^{2}\}ds.
		\end{align*}
		Therefore,
		\begin{align*}
			e^{\alpha t}|y_{t}|^{2}&+\int_{t}^{T}e^{\alpha r}\{|| ((Z_{1}^{\delta;s,x})_{r}'-(Z^{0;s,x})_{r}') ||^{2} -|| (Z_{2}^{\delta;s,x})_{r}'||^{2}\}dr\\
			&+\int_{t}^{T}e^{\alpha r} ({\rm Tr}[\Gamma^{\delta}_{r}\sigma(\tilde{X}^{\delta;s,x}_{r})\sigma(\tilde{X}^{\delta;s,x}_{r})']-{\rm Tr}[\Gamma^{0}_{r}\sigma(\tilde{X}^{0;s,x}_{r})\sigma(\tilde{X}^{0;s,x}_{r})'])dr,\\
			=& h(\tilde{X}^{\delta;t,x}_{T})-h(X^{0;t,x}_{T})+\int_{t}^{T} e^{\alpha r}(-\alpha)  |y_{s}|^{2}dr\\
			&+\int_{t}^{T}2|y_{s}|\{{f}^{\delta}(r, \tilde{X}^{\delta;s,x}_{r}, Y^ {\delta;s,x}_{r}, Z^{\delta;s,x}_{r}, \Gamma^{\delta;s,x}_{r})-{f}^{0}(r, X^{0;s,x}_{r}, Y^{0;s,x}_{r}, Z^{0;s,x}_{r}, \Gamma^{0;s,x}_{r})\})dr\\
			&-\int_{t}^{T}2e^{\alpha r }|y_{s}|\{((Z_{1}^{\delta;s,x})_{r}'-(Z^{0;s,x})_{r}')dW^{1}_{r} + (Z_{2}^{\delta;s,x})_{r}'dW^{2}_{r} \}.
		\end{align*}
		Since for all $\varepsilon> 0$, we have $2ab \leq a^{2}/\varepsilon + \varepsilon b^{2}$, it follows that 
		\begin{align*}
			e^{\alpha t}|y_{t}|^{2} &+ \int_{t}^{T}e^{\alpha r}\{|| ((Z_{1}^{\delta;s,x})_{r}'-(Z^{0;s,x})_{r}') ||^{2} -|| (Z_{2}^{\delta;s,x})_{r}'||^{2}\}dr\\
			&+\int_{t}^{T}e^{\alpha r} ({\rm Tr}[\Gamma^{\delta}_{r}\sigma(\tilde{X}^{\delta;s,x}_{r})\sigma(\tilde{X}^{\delta;s,x}_{r})']-{\rm Tr}[\Gamma^{0}_{r}\sigma(\tilde{X}^{0;s,x}_{r})\sigma(\tilde{X}^{0;s,x}_{r})'])dr,\\
			\leq & h(\tilde{X}^{\delta;t,x}_{T})-h(X^{0;t,x}_{T})+\int_{t}^{T} e^{\alpha r}(-\alpha  |y_{s}|^{2}dr\\
			&+\int_{t}^{T}(|y{s}|^{2}/\varepsilon+\varepsilon \{{f}^{\delta}(r, \tilde{X}^{\delta;s,x}_{r}, Y^ {\delta;s,x}_{r}, Z^{\delta;s,x}_{r}, \Gamma^{\delta;s,x}_{r})-{f}^{0}(r, X^{0;s,x}_{r}, Y^{0;s,x}_{r}, Z^{0;s,x}_{r}, \Gamma^{0;s,x}_{r})\}^{2})dr\\
			&-\int_{t}^{T}2e^{\alpha r }|y_{s}|\{((Z_{1}^{\delta;s,x})_{r}'-(Z^{0;s,x})_{r}')dW^{1}_{r} + (Z_{2}^{\delta;s,x})_{r}'dW^{2}_{r} \},\,.
		\end{align*}
		Therefore, setting $\alpha=\frac{1}{\varepsilon}$, we conclude 
		\begin{equation}\label{eq37}
			\begin{aligned}
				e^{\alpha t}|y_{t}|^{2}&+\int_{t}^{T}e^{\alpha r}\{|| ((Z_{1}^{\delta;s,x})_{r}'-(Z^{0;s,x})_{r}') ||^{2} +|| (Z_{2}^{\delta;s,x})_{r}'||^{2}\}dr\\
				&+\int_{t}^{T}e^{\alpha r} ({\rm Tr}[\Gamma^{\delta}_{r}\sigma(\tilde{X}^{\delta;s,x}_{r})\sigma(\tilde{X}^{\delta;s,x}_{r})']-{\rm Tr}[\Gamma^{0}_{r}\sigma(\tilde{X}^{0;s,x}_{r})\sigma(\tilde{X}^{0;s,x}_{r})'])dr\\
				\leq & h(\tilde{X}^{\delta;t,x}_{T})-h(X^{0;t,x}_{T})\\
				&+\varepsilon \int_{t}^{T}\{{f}^{\delta}(r, \tilde{X}^{\delta;s,x}_{r}, Y^ {\delta;s,x}_{r}, Z^{\delta;s,x}_{r}, \Gamma^{\delta;s,x}_{r})-{f}^{0}(r, X^{0;s,x}_{r}, Y^{0;s,x}_{r}, Z^{0;s,x}_{r}, \Gamma^{0;s,x}_{r})\}^{2})dr\\
				& -\int_{t}^{T}2e^{\alpha r }|y_{s}|\{((Z_{1}^{\delta;s,x})_{r}'-(Z^{0;s,x})_{r}')dW^{1}_{r} + (Z_{2}^{\delta;s,x})_{r}'dW^{2}_{r} \}.
			\end{aligned}
		\end{equation}
		Because $X_{t}$ and $V_{t}$ have finite moments of any order, the imposed regularity condition on $h$, together with \cite[Thm.~5.2.2]{giga2010nonlinear}, Theorem\thinspace\ref{theo1}, Proposition\thinspace\ref{prop3}, and Proposition\thinspace\ref{prop4} in this paper, imply 
		$$\mathbb{E}(h(\tilde{X}^{\delta;t,x}_{T})-h(X^{0;t,x}_{T}))\leq C\delta,$$
		and 
		$$\mathbb{E}(\{{f}^{\delta}(r, \tilde{X}^{\delta;s,x}_{r}, Y^ {\delta;s,x}_{r}, Z^{\delta;s,x}_{r}, \Gamma^{\delta;s,x}_{r})-{f}^{0}(r, X^{0;s,x}_{r}, Y^{0;s,x}_{r}, Z^{0;s,x}_{r}, \Gamma^{0;s,x}_{r})\}^{2})\leq C_{0}\delta\,.$$
		Hence 
		\begin{align*}
			& \mathbb{E}\left[ \sup_{t\leq s \leq T} e^{\alpha t}|y_{t}|^{2} \right]\\
			& \leq C\delta +C_{0}\varepsilon\delta+C_{1}\mathbb{E}\left[\left( \int_{t}^{T}e^{2\alpha r} |y_{s}|^{2}\{ || ((Z_{1}^{\delta;s,x})_{r}'-(Z^{0;s,x})_{r}') ||^{2} +|| (Z_{2}^{\delta;s,x})_{r}'||^{2}\}dr \right)^{\frac{1}{2}}\right]\\
			& \leq C\delta +C_{0}\varepsilon\delta+C_{1}\mathbb{E}\left[ \sup_{t\leq s \leq T} e^{\alpha t/2}|y_{t}|\left(\int_{t}^{T} e^{\alpha r} \{ || ((Z_{1}^{\delta;s,x})_{r}'-(Z^{0;s,x})_{r}') ||^{2} +|| (Z_{2}^{\delta;s,x})_{r}'||^{2}\}dr  \right)^{\frac{1}{2}} \right], 
		\end{align*}
		which together with the inequality $ab \leq a^{2}/2 + b^{2}/2$ yields 
		\begin{align*}
			& \mathbb{E}\left[ \sup_{t\leq s \leq T} e^{\alpha t}|y_{t}|^{2} \right]\\ 
			& \leq C\delta +C_{0}\varepsilon\delta+ \frac{1}{2}\mathbb{E}\left[\sup_{t\leq s \leq T} e^{\alpha t}|y_{t}|^{2} \right] + \frac{C_{1}^{2}}{2}\mathbb{E}\left[\int_{t}^{T} e^{\alpha r}\{ || ((Z_{1}^{\delta;s,x})_{r}'-(Z^{0;s,x})_{r}') ||^{2} +|| (Z_{2}^{\delta;s,x})_{r}'||^{2}\}dr \right].
		\end{align*}
		As a consequence of the inequality \eqref{eq37}, we thus obtain 
		\begin{align*} 
			& \mathbb{E}\left[ \sup_{t\leq s \leq T} e^{\alpha t}|y_{t}|^{2} +\int_{t}^{T} e^{\alpha r}\{ || ((Z_{1}^{\delta;s,x})_{r}'-(Z^{0;s,x})_{r}') ||^{2} +|| (Z_{2}^{\delta;s,x})_{r}'||^{2}\}dr\right.\\ & \left.+2\int_{t}^{T}e^{\alpha r} ({\rm Tr}[\Gamma^{\delta}_{r}\sigma(\tilde{X}^{\delta;s,x}_{r})\sigma(\tilde{X}^{\delta;s,x}_{r})']-{\rm Tr}[\Gamma^{0}_{r}\sigma(\tilde{X}^{0;s,x}_{r})\sigma(\tilde{X}^{0;s,x}_{r})'])dr \right]\\
			\leq & C\delta +C_{0}\varepsilon\delta+C_{1}^{2}\,,
		\end{align*}
		which entails the final result:  
		$$ \mathbb{E}\left[ \sup_{t\leq s \leq T} e^{\alpha t}|y_{t}|^{2}\right]\leq \delta \tilde{C}_{\varepsilon}$$
		for some $\tilde{C}_{\varepsilon}>0$ independent of $\delta$. 
	\end{proof}
	
	
	
	
	\section{Numerical illustration}\label{sec:numerics}
	
	We conclude with a numerical demonstration of the theoretical results to confirm that $|P_\delta -  P_0|=\mathcal{O}(\delta)$. To this end, note that the valuation of financial derivatives based on our UV model requires solving the G-HJB equation (\ref{NPDE}), which is typically not analytically solvable. 
	
	In low dimension, we can implement a finite difference scheme; here we follow a different route and take advantage of the link between G-PDE and 2BSDE. To be specific the payoff function is chosen as 
	\begin{equation*}\label{TerExample}
		h(x)=(x-90)^{+}-2(x-100)^{+}+(x-110)^{+}
	\end{equation*}
	We consider the following parameters: 
	\[
	\tilde{x}=(\tilde{x}_0,k_0)=(100,-1)\,,\;  \sigma_{\min}=0.1, \sigma_{\max}=0.2\,,\;  \alpha =2\,,\;   T=0.15\,,\;  a=0.6\,, \; b=0.5\,,\; \rho=0.5\,.\]
	For these parameters, we compute the difference between $P_\delta$ and $P_0$, the solutions of the G-PDE \eqref{NPDE} and  \eqref{EDP0}, using the deep learning 2BSDE solver introduced by Beck et al. \cite{beck2019machine}. More specifically, we numerically solve the 2BSDEs \eqref{2bsde1EXAMPLE}-\eqref{2bsde3EXAMPLE} and \eqref{2bsde0EXAMPLE}-\eqref{2bsde03EXAMPLE} with the  Python code provided in \cite{beck2019machine}.  
	
	The result is shown in Table \ref{tab:tab1} and Figure \ref{fig:fig1}. Neglecting the error invoked by the numerical approximation of the deep neural network, which is difficult to assess, the numerical calculation confirms that $|P_\delta -  P_0|\simeq \mathcal{O}(\delta^{0.7})$, which is in agreement with the predictions of Theorem \ref{theo1} and Theorem \ref{conv}.   
	
	\begin{table}
		\begin{tabularx}{0.8\textwidth} {
				| >{\raggedright\arraybackslash}X
				| >{\centering\arraybackslash}X
				| >{\raggedleft\arraybackslash}X
				| >{\raggedleft\arraybackslash}X | }
			\hline
			$\delta$ & 0.5 & 0.2  & 0.001 \\
			\hline
			$error(\delta)$  & 1.2  & 0.6  & 0.02  \\
			\hline
		\end{tabularx}
		\caption{The error $\varepsilon^{0,x}(\delta)=P_\delta(0,x)-P_0(0,x)$ for $\tilde{x}=(100,-1)$.\label{tab:tab1}}
	\end{table}
	
	\begin{figure}[h]
		\begin{center}
			\includegraphics[width=0.75\textwidth]{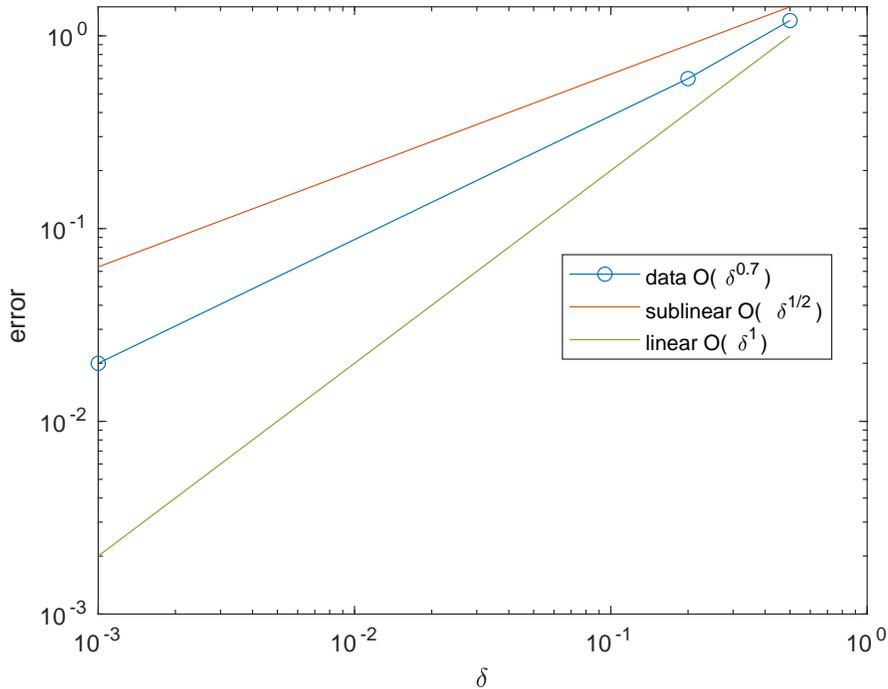}
			\caption{The error $\varepsilon^{0,x}(\delta)=P_\delta(0,x)-P_0(0,x)$ in doubly logarithmic scale; the slope is roughly 0.7}\label{fig:fig1}
		\end{center}
	\end{figure}

	\newpage 
	\section{Conclusion and outlook}\label{sec:conclusion}
	In this work we have studied $\alpha$-hypergeometric stochastic models with uncertain volatility (UV). The idea is to connect the UV model with a nonlinear expectation framework to derive a worst-case price scenario, avoiding the complicated and numerically expensive model calibration step. We have studied the asymptotic behaviour of the worst-case scenario option prices in the case when the time scale at which the stochastic volatility process varies tends to infinity (i.e. when the volatility process becomes infinitely slow). 
	As we have shown, the limit model is an accurate simplified description of the UV model in the regime of the slow variable of the uncertain volatility bounds. The method presented here can be applied also for other models such as the Heston model. 	
	
	We have illustrated our results by a numerical example. The numerical solution of our problem is based on the known link of fully nonlinear second order partial differential equations that describe the worst-case price scenario and second-order backward stochastic differential equations (2BSDEs). We should emphasize that the numerical algorithm we use for solving 2BSDEs even works when the terminal cost that determines the payoff is non-differentiable. Although this paper is only giving a proof of concept, we expect that the ideas can be applied also in the case of UV models when, for example, there is only partial information from the market.
	
	\section{Acknowledgement}
	
	This work has been partially supported by the MATH+ Cluster of Excellence (DFG-EXC 2046) through the project EP4-4.

	\bibliographystyle{plain}
	\bibliography{biblyo}
	\it
	\noindent

\end{document}